\pgfplotsset{compat=1.4}
\newcommand{\eps}{\varepsilon}
\newcommand{\rr}{\mathbb{R}}
\newtheorem{theorem}{Theorem}
\newtheorem{corollary}{Corollary}
\newtheorem{lemma}{Lemma}
\newtheorem{definition}{Definition}
\newtheorem*{definition*}{Definition}
\newtheorem*{theorem*}{Theorem}
\newtheorem*{lemma*}{Lemma}
\title{Opponent Indifference in Rating Systems:\\ A Theoretical Case for Sonas}
\author{Greg Bodwin and Forest Zhang \\ \texttt{\{bodwin, forestz\}@umich.edu}\\
University of Michigan EECS}
\date{}
\begin{document}
\maketitle

\begin{abstract}
In competitive games, it is common to assign each player a real number \emph{rating} signifying their skill level.
A \emph{rating system} is a procedure by which player ratings are adjusted upwards each time they win, or downwards each time they lose.

Many matchmaking systems give players some control over their opponent's rating; for example, a player might be able to selectively initiate games against opponents whose ratings are publicly visible, or abort a game without penalty before it begins but after glimpsing their opponent's rating.
It is natural to ask whether one can design a rating system that does not incentivize a rating-maximizing player to act strategically, seeking games against opponents of one rating over another.
We show the following:
\begin{itemize}
    \item The \emph{full} version of this ``opponent indifference'' property is unfortunately too strong to be feasible.
    Although it is satisfied by some rating systems, these systems lack certain desirable expressiveness properties, suggesting that they are not suitable to capture most games of interest.
    
    \item However, there is a natural relaxation, roughly requiring indifference between any two opponents who are both ``reasonably evenly matched'' with the choosing player.
    We prove that this relaxed variant of opponent indifference, which we call $P$ opponent indifference, is viable.
    In fact, a certain strong version of $P$ opponent indifference precisely characterizes the rating system \emph{Sonas}, which was originally proposed for its empirical predictive accuracy on the outcomes of high-level chess games.
\end{itemize}
\end{abstract}

\setcounter{page}{0}

\thispagestyle{empty}

\clearpage

\section{Introduction}





For two-player competitive games like chess, professional sports, online gaming, etc, it is common practice to assign players/teams a real-number \emph{rating} capturing their skill level.
Ratings are commonly applied in matchmaking systems \cite{AL09, AM17}; some more creative recent applications include pricing algorithms \cite{YDM14}, item response theory \cite{Forisek09}, etc.\ (see Section \ref{sec:priorwork} for more detail on the contribution of these papers).
A \emph{rating system} is an algorithm that adjusts a player's rating upwards after each win, or downwards after each loss.
Some notable rating systems used in practice include Harkness \cite{Harkness67}, Elo \cite{Elo67}, Glicko \cite{Glickman95}, Sonas \cite{Sonas02}, TrueSkill \cite{HMG06}, URS \cite{UniversalRating}, and more.
We refer to the papers \cite{Glickman95, MCZ18, HMG06}, which each develop a notable rating system, for high-quality technical overviews of the inner workings and contrasts between the rating systems in the literature.

\subsection{Our Model of Rating Systems}

The exact model of rating systems studied in this paper is original, although really all of its ingredients are slight variants, rephrasings, or abstractions of ingredients that appear in prior work (we will address these similarities as we go).
Our goal is to consider a maximally general model of rating systems that are:
\begin{itemize}
\item \emph{One-dimensional}, meaning that they only track a rating parameter for each player and no additional statistics of performance,
\item \emph{Memoryless}, meaning that when the system updates players' ratings it considers only the game that was just played and not the history of previous game outcomes, and
\item \emph{Zero-sum}, meaning that the number of rating points gained by the winner of a game is always equal to the number of rating points lost by the loser.
\end{itemize}

To be clear, many rating systems used in theory or practice do \emph{not} satisfy these properties for various reasons.
Typically, this is because they intentionally augment a ``base'' rating system (which often does satisfy these three properties) to consider additional player statistics, game history, etc., in a way that trades simplicity and interpretability for improved empirical predictive accuracy of game outcomes \cite{Glickman95, CJ16, MRM07, CSEN17, DHMG07, NS10, MCZ18}.
Thus, this paper aims to study the space of ``base'' rating systems on which these more complex but practically accurate rating systems are built.

Additionally, a few of our results will consider the following property of rating systems:
\begin{itemize}
\item \emph{Translation-invariant}, meaning that the win probabilities for two players are determined entirely by the additive difference between their ratings.  (See Definition \ref{def:translationinvariant} for a formal definition.)
\end{itemize}
Translation invariance is common in practice; it implies that rating differences carry the same meaning for low-rated players as for high-rated players, and thus it makes the system more interpretable.
Many systems specifically follow the convention that a $200$-point rating difference maps to $0.75$ win probability for the higher-rated player; this rule was first applied by the rating system ELO \cite{Elo67}.

The goal of this paper is instead to continue a recent line of work on \emph{incentive-compatibility} as a property of rating systems.
The idea is that players experience rating as an incentive, and they will sometimes take strategic action to maximize their rating in the system.
For example, Glicko-2 was recently subject to an attack called \emph{volatility hacking} in which players achieved long-run boosts in their rating by deliberately losing certain games \cite{EL21}.
Some modern rating systems like URS address the possibility of strategic action by leaving parameters of their rating formula unpublished.\footnote{Jeff Sonas, personal communication, 2022}
This has several disadvantages; for example, players are less able to predict their rating changes, which in turn makes it harder to detect transcription errors in game outcome reporting. 
Similarly, bugs in the implementation of a rating system may be harder to detect.
It is desirable to achieve notions of incentive compatibility, even when the rating system in play is entirely public knowledge.

Our focus is on a new rating system property, informally stating that the system does not incentivize a rating-maximizing player to selectively seek games against opponents of one rating over another (even when the player currently feels over- or underrated by the system).
One of our main conceptual contributions is just to \emph{define} this property in a reasonable way: as we discuss shortly, the natural first attempt at a definition turns out to be too strong to be feasible.
We then describe a different version of the definition, and in support of this definition, we show that there is a natural class of rating systems (most notably including \emph{Sonas}, described shortly) that satisfy this alternate definition.
In order to explain this progression in ideas formally, we next give details on our model of rating systems.

\subsubsection{Formal Model of Rating Systems}
We follow the common basic model that each player has a hidden ``true rating,'' as well as a visible ``current rating'' assigned by the system.
Although current ratings naturally fluctuate over time, the goal of a rating system is to assign current ratings that are usually a reasonable approximation of true ratings.
When two players play a game, their respective true ratings determine the probability that one beats the other, and their respective current ratings are used by the system as inputs to a function that determines the changes in the ratings of the winner and the loser.

Concretely, a rating system is composed of two ingredients.
The first is the \emph{skill curve} $\sigma$, which encodes the system's model of the probability that a player of true rating $x$ will beat a player of true rating $y$.
The second is the \emph{adjustment function} $\alpha$, which takes the players' current ratings as inputs, and encodes the number of points that the system awards to the first player and takes from the second player in the event that the first player beats the second player.
We are not allowed to pair together an \emph{arbitrary} skill curve and adjustment function; rather, a basic fairness axiom must be satisfied.
This axiom is phrased in terms of an \emph{expected gain function}, which computes the expected number of rating points gained by a player of current rating $x$ and true rating $x^*$, when they play a game against a player of current rating $y$ and true rating $y^*$.
The fairness axiom states that, when any two correctly-rated players play a game, their expected rating change should be $0$.

\begin{definition} [Rating Systems] \label{def:ratingsystems}
A \textbf{rating system} is a pair $(\sigma, \alpha)$, where:
\begin{itemize}
\item $\sigma : \rr^2 \to [0, 1]$, the \textbf{skill curve}, is a continuous function that is weakly increasing in its first parameter and weakly decreasing in its second parameter.
We assume the game has no draws, and thus $\sigma$ must satisfy\footnote{We ignore the possibility of draws in this exposition purely for simplicity.  In the case of draws, one could interpret a win as $1$ victory point, a draw as $1/2$ a victory point, and a loss as $0$ victory points, and then interpret $\sigma(x, y)$ as the expected number of victory points scored by a player of true rating $x$ against a player of true rating $y$. See \cite{Kovalchik20} for more discussion, and for a rating systems that can handle even finer-grained outcomes than the win/loss/draw trichotomy.}
\begin{align*}
\sigma(x, y) + \sigma(y, x) = 1 \tag*{for all $x, y$. ($\sigma$ Draw-Free)}
\end{align*}
\item $\alpha : \rr^2 \to \rr_{\ge 0}$ is the \textbf{adjustment function}. We assume that every game has the possibility of changing the rating of one of the players, and thus $\alpha$ must satisfy
\begin{align*}
\alpha(x, y) + \alpha(y, x) > 0 \tag*{for all $x, y$. (Consequential)}
\end{align*}
\item Let $\gamma : \rr^4 \to \rr$ be the \textbf{expected gain function}, defined by
$$\gamma(x,x^* \mid y,y^*) := \alpha(x,y)\sigma(x^*,y^*)-\alpha(y,x)\sigma(y^*,x^*).$$
Then we require the following fairness axiom:
$$\gamma(x, x \mid y, y) = 0 \text{ for all } x, y.$$
\end{itemize}
\end{definition}

The fairness axiom is only implicit in prior work.
Other rating systems, including ELO, Glicko, etc., \cite{Glickman95, MCZ18, HMG06}, often define their rating system by giving a skill curve and a \emph{$K$-function}, which together imply an adjustment function in a sense we next explain.
An adjustment function can be implicitly defined in this way if and only if it satisfies this fairness axiom.
It will be technically convenient for us to be able to have both the explicit adjustment function $\alpha$ and the $K$-function to reference in our proofs.

\subsubsection{$K$-Functions}
Rearranging the fairness axiom, we get the identity
\begin{align*}
    \sigma(x, y) &= \frac{\alpha(y, x)}{\alpha(x, y) + \alpha(y, x)}.
\end{align*}
Thus the adjustment function determines the skill curve (in the sense that, for a given adjustment function $\alpha$, there is a unique skill curve $\sigma$ that satisfies the fairness axiom).
However, when designing a rating system, it is more intuitive to model the game with a skill curve first and then pick an adjustment function second.
The skill curve does \emph{not} fully determine the adjustment function; rather, the value of $\sigma(x, y)$ only determines the \emph{ratio} between $\alpha(x, y)$ and $\alpha(y, x)$.
The \emph{$K$-function} is an auxiliary function that determines the scaling.
We define it as the denominator part of the previous identity.\footnote{One can also interpret $K$-functions through a gambling model where the players respectively put $\alpha(x, y), \alpha(y, x)$ of their rating points into a pot, and then the winner takes all the rating points in the pot.  The $K$-function $K(x, y) = K(y, x)$ is the pot size.}
\begin{definition} [$K$-Functions]
For a rating system $(\sigma, \alpha)$, its associated \textbf{$K$-Function} is defined by $K(x, y) := \alpha(x, y) + \alpha(y, x)$.
\end{definition}

Note that being consequential means that $K(x,y)>0$ for all $x,y$.

Many rating systems in the literature do not \emph{explicitly} define an adjustment function; rather, they define their adjustment function \emph{implicitly} by instead giving the skill curve and the $K$-function \cite{Glickman95, Elo67}.
Most notably, Elo uses \emph{$K$-factors}, which are a special case of our $K$-functions.\footnote{We have chosen the name \emph{$K$-function} to indicate that they generalize $K$-factors.  $K$ does not stand for anything; is an arbitrary variable name in the Elo adjustment function formula.}
See \cite{Sonas11} for a discussion of the practical considerations behind choosing a good $K$-function.

\subsection{The Elo and Sonas Rating Systems}

The most famous and popular rating system used in practice is probably Elo.
Its basis is a logistic skill curve: in a game between two players of true ratings $x, y$, Elo maps the quantity $x-y$ to a win probability for the first player using a logistic function.
Many implementations of ELO, like the one used by the chess federation FIDE, also include a \emph{thresholding rule}: e.g., if $x-y > 400$, then the FIDE implementation instead treats $x-y$ as exactly $400$.
In other words, it assumes the game is chaotic enough that no player ever has more than a 96\% win probability over another.
This thresholding also regularizes against extreme rating swings.

\begin{figure}[h]
    \centering
    \includegraphics[scale=0.4]{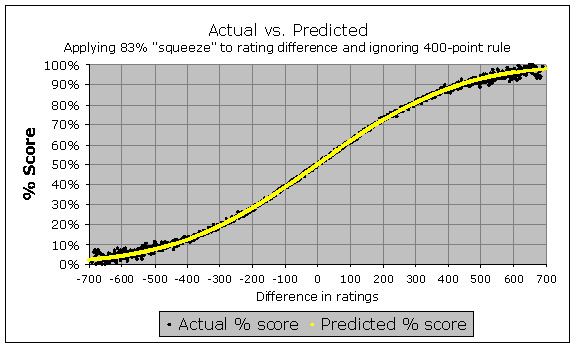}
    \includegraphics[scale=0.4]{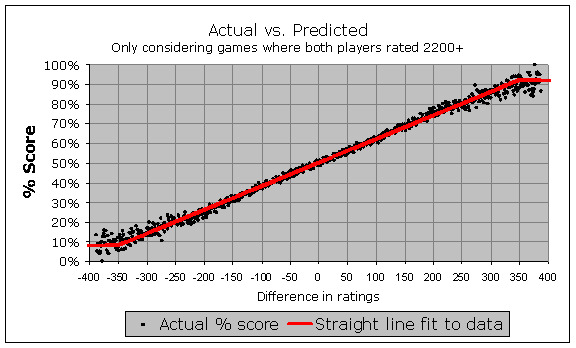}%

    \caption{Left: Elo skill curve (in yellow) overlayed with empirical data on the outcomes of FIDE-rated chess games.  Right: Sonas skill curve (in red) overlayed with empirical data on the outcomes of FIDE-rated chess games in which both players were rated 2200 or higher.  Figures by Jeff Sonas, taken from \cite{Sonas11}.  See \cite{Sonas02, Sonas20} for additional discussion.}
    \label{fig:my_label}
\end{figure}

In 2002, Jeff Sonas published a famous critique of Elo \cite{Sonas02}, in which he analyzed a large sample of FIDE rated chess games among highly-rated players.
He argued that the logistic curve is so flat within this player pool that a threshold-linear skill curve fits the data just as well, and hence is preferable due to its simplicity.
We will henceforth call skill curves that are linear within some threshold \emph{Sonas-like skill curves}:
\begin{definition} [Sonas-like skill curves] \label{def:sonaslike}
A skill curve is \textbf{Sonas-like} if there exist $a, s > 0$ such that
for ratings $x,y \in \rr$, when $|x-y| \le s$, we have $\sigma(x,y) = ax-ay + 0.5$.
\end{definition}

Sonas' rating system specifically took $s=400$, and assumes a flat skill curve outside this range; that is, $\sigma(x, y) = \sigma(y+400, y)$ when $x > y+400$.
In 2011 \cite{Sonas11}, Sonas augmented his previous analysis with an interesting nuance: when \emph{all} FIDE rated chess games are considered, including those where the participants have lower ratings, then the logistic curve is not so flat and Elo skill curves seem superior.
Thus, the relative value of Elo and Sonas seems to depend a bit on the strength of the pool of players being modeled. 

\subsection{Strategyproof Rating Systems and Our Results}

A system is strategyproof against a certain kind of undesirable strategic behavior if it does not incentivize a rating-maximizing player to perform that behavior.
We next overview a recent case study, in which players in a real-world system discovered and implemented an attack on a rating system that was missing an important strategyproofness property.

\subsubsection{Volatility Hacking}

An obviously-desirable strategyproofness property is that a rating-maximizing player should always be incentivized to try their best to win each game they play.
Interestingly, these incentives were recently discovered \emph{not} to hold in the popular Glicko-2 rating system.

Glicko-2 \cite{Glickman95} is a higher-dimensional version of Elo that tracks a \emph{volatility parameter} for each player.
A higher volatility parameter indicates that the system is more uncertain about the player's true rating, which in turn amplifies the adjustment function.
This volatility parameter was repeatedly shown to improve predictive accuracy of systems, and so it has been widely adopted.
But recently, an attack has been discovered called \emph{volatility hacking}.
A player first strategically loses many games, lowering their rating to the point where they only face much lower-rated opponents and can essentially win or lose at will.
Then, the player alternates wins and losses, which boosts their volatility parameter as much as they like while keeping their rating roughly stable.
Finally, they win a few games, and due to their enormous volatility parameter their rating skyrockets well above its initial value.

Volatility hacking was used to attack the game Pok{\'e}mon Go \cite{PokemonReddit}, which uses a version of Glicko-2 to rank players.
An interesting recent paper by Ebtekar and Liu \cite{EL21} shows that some other popular rating systems are also vulnerable to volatility hacking attacks, and it proposes concrete fixes to the handling of volatility parameters that would make a Glicko-2-like system (with a volatility parameter) strategyproof against volatility hacking.

\subsubsection{Opponent Indifference and First Main Result}

This paper is about securing rating systems against a less pernicious but likely much more common type of attack that we will call \emph{opponent selection}.
Many matchmaking systems provide players with some control over their next opponent.
For example, some online games use an interface in which players view a public list of active challenges from opponents of various ratings, and they can accept any challenge they like.
Alternately, some games use random matchmaking, but then let players abort a game without penalty after glimpsing their opponent's rating.
If players have any control over their next opponent's rating, then it is clearly undesirable for a rating system to allow a player to boost their rating by exercising this control strategically.
For example, in the Elo rating system, one can compute that the magnitude of expected gain is greatest when a (correctly-rated) opponent's true rating is exactly halfway between a player's current rating and their true rating.

Which rating systems are \emph{immune} to opponent selection incentives?
The natural definition would capture systems where the a player's expected gain doesn't depend on their opponent's rating, so long as that opponent is correctly rated.
Formally:
\begin{definition} [Opponent Indifference]
A rating system $(\sigma, \alpha)$ with expected gain function $\gamma$ is \textbf{opponent indifferent} if there is a function $\gamma^* : \rr^2 \to \rr$ such that
$$\gamma(x, x^* \mid y, y) = \gamma^*(x, x^*) \text{ for all } x, x^*, y.$$
\end{definition}

Before we proceed, let us be more specific about the kind of opponent selection attacks that become impossible in a rating system that satisfies this definition.
The premise is that the players in a system are not usually correctly rated, i.e., their current rating differs significantly from their true rating.
One reason is due to random walk effects, where a player's current rating will naturally drift over time.
Another reason is that a player's true rating might change from day to day due to external factors like rest, stress, distraction, etc.\footnote{Experts have estimated that chess strength fluctuates by about $\pm$ 200 Elo rating points (in the FIDE system) on a given day due to external factors \cite{COchess}.  However, we have not been able to find statistical validation of these estimates, so they should be taken as informal.}
An opponent selection attack would have a player seek opponents of one rating when they feel overrated by the system, and another when they feel underrated, causing their rating fluctuations to hold a significantly higher average than they would without strategic behavior.

Having motivated the axiom of opponent indifference, our next goal is to check whether it is satisfied by any actual rating systems, and whether these systems are ``reasonable.''
Although opponent indifferent rating systems exist (this follows from Lemma \ref{lemma:exist-op} in this paper), we actually argue that they are \emph{not} sufficiently expressive to be interesting.
Our reasoning is as follows.
Most interesting games exhibit \emph{skill chains}.
In chess, for example, a hobbyist can beat a beginner (say) 95\% of the time, and an expert can beat a hobbyist 95\% of the time, and a master can beat an expert 95\% of the time, and so on.
Most rating systems allow, in theory, for an infinite chain of players of ascending ratings, with each player beating the next with reasonably high probability.
We can formalize a version of this property as follows:
\begin{definition} [Full Scale] \label{def:fullscale}
A rating system $(\sigma, \alpha)$ has \textbf{full scale} if there exists $p > 0.5$ and an infinite ascending chain of ratings $r_1 < r_2 < \dots$ such that $\sigma\left(r_i, r_{i-1}\right) \ge p$ for all $i$.
\end{definition}

Our critique of opponent indifference is that it is incompatible with full scale, and thus cannot really capture games like chess.
\begin{theorem} [First Main Result] \label{thm:introOIFS}
There is no rating system that simultaneously satisfies opponent indifference and full scale.
\end{theorem}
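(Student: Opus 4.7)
The plan is to use opponent indifference to extract a telescoping identity along the full-scale chain, then show that the telescoping forces $\sigma$ to leave $[0,1]$. Let $r_0 < r_1 < \dots$ be the ascending chain from the full-scale hypothesis, with $\sigma(r_i, r_{i-1}) \ge p > 1/2$.

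First I would unpack the expected gain against a correctly-rated opponent. By the definition of $\gamma$ and the draw-free axiom,
\[
\gamma(x, x^* \mid y, y) \;=\; \alpha(x,y)\sigma(x^*, y) - \alpha(y,x)\bigl(1 - \sigma(x^*, y)\bigr) \;=\; K(x, y)\bigl[\sigma(x^*, y) - \sigma(x, y)\bigr].
\]
Opponent indifference then says this expression is independent of $y$. A convenient byproduct of the chain hypothesis is that, by monotonicity of $\sigma$, $\sigma(r_0, r_i) \le 1-p < 1/2$ for every $i \ge 1$, so chain points are always nontrivial opponents of $r_0$.

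Next I would set $x = r_0$, $y_1 = r_0$, $y_2 = r_i$ in the indifference identity, obtaining
\[
K(r_0, r_0)\bigl[\sigma(x^*, r_0) - \tfrac12\bigr] \;=\; K(r_0, r_i)\bigl[\sigma(x^*, r_i) - \sigma(r_0, r_i)\bigr] \qquad \text{for every } x^*.
\]
Evaluating at $x^* = r_i$ and using draw-free reduces both sides to a common factor of $[1/2 - \sigma(r_0, r_i)]$ times the respective $K$-value. Since $\sigma(r_0, r_i) \ne 1/2$, this forces $K(r_0, r_0) = K(r_0, r_i)$. (The degenerate case $K(r_0, r_0) = 0$ collapses $\sigma$ to be constant in its first argument, which itself contradicts full scale.) Dividing the display by this common $K$-value yields the structural identity
\[
\sigma(x^*, r_i) \;=\; \sigma(x^*, r_0) + \sigma(r_0, r_i) - \tfrac12 \qquad \text{for all } x^* \text{ and } i \ge 1.
\]

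Finally I would substitute $x^* = r_i$ with $r_{i-1}$ in place of $r_i$, and use draw-free to simplify $\sigma(r_i, r_0) = 1 - \sigma(r_0, r_i)$, giving
\[
\sigma(r_i, r_{i-1}) \;=\; \tfrac12 + \sigma(r_0, r_{i-1}) - \sigma(r_0, r_i).
\]
Writing $a_i := \sigma(r_0, r_i)$, the full-scale bound $\sigma(r_i, r_{i-1}) \ge p$ becomes $a_{i-1} - a_i \ge p - \tfrac12 > 0$. Telescoping from $a_0 = 1/2$ gives $a_n \le \tfrac12 - n(p - \tfrac12)$, which becomes negative for large $n$, contradicting $\sigma \in [0,1]$. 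The main subtlety is the $K$-cancellation step, which requires both $K > 0$ and $\sigma \ne 1/2$ at the evaluation points; the hard part is recognizing that restricting the derivation to the chain (rather than trying for a global structural characterization of $\sigma$) makes these conditions automatic, since $p > 1/2$ combined with monotonicity guarantees that every value appearing in the argument is non-degenerate.
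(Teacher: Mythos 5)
Your proposal is correct and follows essentially the same route as the paper: your $K$-cancellation step is Lemma \ref{opntk-func}, your structural identity is Lemma \ref{lemma:relation-xyz} (with $z = r_0$, rewritten via draw-free), and your telescoping of $a_i = \sigma(r_0, r_i)$ below $0$ is the same computation as the induction in Theorem \ref{thm:nooifs} (which pushes $\sigma(r_N, r_1)$ above $1$), just carried out locally along the chain rather than through a globally constructed bisector. One small caveat: your parenthetical claim that $K(r_0,r_0)=0$ forces $\sigma$ to be constant in its first argument does not actually follow (it only yields $K(r_0,y)\bigl(\sigma(x^*,y)-\sigma(r_0,y)\bigr)=0$ for all $x^*,y$), but this is immaterial here since the paper's framework implicitly assumes $K>0$ — its own proof divides by the constant $C$ in Lemma \ref{lemma:relation-xyz} — as built into the rearranged fairness identity $\sigma(x,y)=\alpha(y,x)/K(x,y)$.
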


Although we note that every major rating system we are aware of satisfies full scale, one could object at this point that full scale is too strong.
One might not expect an \emph{infinite} chain of ratings $r_1 < r_2 < \dots$ as in the definition of full scale to exist, but rather only a ``reasonably long'' chain of ratings.
The argument in our impossibility theorem (Theorem \ref{thm:nooifs}) arguably restricts these rating chains to be unreasonably short.
For example: if we take parameter $p=0.75$, i.e., we want a chain of ratings $r_1 < r_2 < \dots$ where each rating beats the previous one with probability $\ge 0.75$, this chain can have only length $3$.
This implies that an opponent indifferent rating system following the ELO convention where $\sigma(x+400, x) = 0.75$ would have a maximum possible rating spread of only $600$ points.

\subsubsection{$P$ Opponent Indifference and Second Main Result}

In light of Theorem \ref{thm:introOIFS}, our goal is now to weaken the definition of opponent indifference in a way that escapes the impossibility result of Theorem \ref{thm:introOIFS}, while still providing an effective type of immunity against opponent selection attacks.
Fortunately, we argue that a reasonable relaxation of opponent indifferent is already implicit in the \emph{thresholding effects} used in implementations of Elo and in Sonas.
These systems hardcode a rating difference at which the rating system punts: the system declares that chaotic effects dominate, and ratings are no longer a good predictor of outcome probabilities.
A natural relaxation is to enforce opponent indifference \emph{only} between opponents in the ``non-chaotic'' rating regime.

We call this weaker version \emph{$P$ opponent indifference}, where the parameter $P$ controls the threshold beyond which we no longer require opponent indifference to hold.
In the following, let us say that ratings $x, y$ are $P$-close if we have $\sigma(x, y) \in (0.5-P, 0.5+P)$.
Formally:\footnote{See Definition \ref{def:poi} for the equivalent technical definition used in the paper.}
\begin{definition} [$P$ Opponent Indifference] \label{pop_def}
For a parameter $P \in (0,0.5]$, a rating system is \textbf{$P$ opponent indifferent} if there is a function $\gamma^* : \rr^2 \to \rr$ such that
$$\gamma(x, x^* \mid y, y) = \gamma^*(x, x^*)$$
for any $x, x^*, y$ with $x,y$ and $x^*, y$ both $P$-close.
\end{definition}

$P$ opponent indifference might feel like a light tweak on opponent indifference;
it places the exact same requirements as opponent indifference on all ``reasonable'' games that might be played (according to a thresholding rule).
Thus it provides immunity against opponent selection attacks in the cases of interest.
But, perhaps surprisingly, this relaxation is enough to escape impossibility.
We prove the following characterization theorem.

\begin{definition} [$P$ Separable]
A skill curve $\sigma$ is \textbf{$P$ separable} if there is a weakly increasing function $\beta : \rr \to \rr$ such that $\sigma(x, y) = \beta(x) - \beta(y) + 0.5$ for all $P$-close $x, y$.
\end{definition}

\begin{definition} [$P$ Constant]\label{def:pconst}
For $P \in (0,0.5]$, a $K$-function is \textbf{$P$ constant} if for all $x,y$ such that $\sigma(x,y)\in(0.5-P,0.5+P)$, $K(x,y) = C$ for some constant C.
\end{definition}

\begin{theorem} [Second Main Result -- See Theorem \ref{theorem:characterize_pop} in the body]
A nontrivial rating system $(\sigma, \alpha)$ is $P$ opponent indifferent if and only if $\sigma$ is $P$ separable and its $K$-function is $P$ constant.
\end{theorem}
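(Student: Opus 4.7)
The plan is to first establish a clean reformulation of the expected gain $\gamma(x,x^*\mid y,y)$ and then do the two directions in turn. Using $\alpha(x,y) = K(x,y)\sigma(y,x)$, which follows by rearranging the fairness axiom together with the definition of $K$, plus the draw-free axiom $\sigma(x,y)+\sigma(y,x)=1$, a direct expansion gives the key identity
\[
\gamma(x,x^*\mid y,y) \;=\; K(x,y)\bigl(\sigma(x^*,y) - \sigma(x,y)\bigr).
\]
This identity reduces the theorem to analyzing the right-hand side under the $P$-closeness conditions. The forward ($\Leftarrow$) direction is then immediate: if $K \equiv C$ and $\sigma(x,y) = \beta(x)-\beta(y)+0.5$ on $P$-close pairs, substitution yields $\gamma(x,x^*\mid y,y)=C(\beta(x^*)-\beta(x))$, which depends on no $y$, so $\gamma^*(x,x^*) := C(\beta(x^*)-\beta(x))$ witnesses $P$ opponent indifference.

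For the backward ($\Rightarrow$) direction I would first show $K$ is $P$ constant. Fix any $P$-close pair $x,x^*$. Since $x$ is $P$-close to itself and to $x^*$, I may legally take $y=x$ and then $y=x^*$ in the indifference equation $\gamma(x,x^*\mid y,y)=\gamma^*(x,x^*)$. Using $\sigma(x,x)=\sigma(x^*,x^*)=0.5$ and $\sigma(x^*,x)=1-\sigma(x,x^*)$, the two specializations collapse to
\[
K(x,x)\bigl(\sigma(x^*,x)-0.5\bigr) \;=\; \gamma^*(x,x^*) \;=\; K(x,x^*)\bigl(\sigma(x^*,x)-0.5\bigr).
\]
Whenever $\sigma(x,x^*) \neq 0.5$, this forces $K(x,x^*)=K(x,x)$, and the symmetric argument gives $K(x,x^*)=K(x^*,x^*)$. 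A continuity plus connectedness argument, leaning on the nontriviality hypothesis to supply witnesses with $\sigma \neq 0.5$, then propagates these pointwise equalities to a single global constant $C$ so that $K(x,y) \equiv C$ on every $P$-close pair.

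With $K \equiv C$ established (and $C>0$ by nontriviality), the indifference equation simplifies to $\sigma(x^*,y)-\sigma(x,y) = \gamma^*(x,x^*)/C$, so the quantity $g(x,x^*) := \sigma(x^*,y)-\sigma(x,y)$ is independent of the witness $y$ (a valid $y$ exists whenever $x,x^*$ are $P$-close, e.g.\ $y=x$). A one-line computation shows $g$ satisfies the cocycle identity $g(x,x^{**})=g(x,x^*)+g(x^*,x^{**})$ on its domain of definition. Fixing a reference point $x_0$ and setting $\beta(x):=g(x_0,x)$, extended along chains of $P$-close points via continuity of $\sigma$, the cocycle relation forces $g(x,x^*)=\beta(x^*)-\beta(x)$. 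Combined with $\sigma(x,x)=0.5$, this gives $\sigma(x,y)=\beta(x)-\beta(y)+0.5$ on all $P$-close pairs, and weak monotonicity of $\beta$ is inherited from the monotonicity of $\sigma$.

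The step I expect to be the main obstacle is bridging the pointwise $K$-equalities (which a priori hold only at pairs with $\sigma \neq 0.5$) into a single constant $C$ across the whole $P$-close region. This is precisely where the content of the \emph{nontrivial} hypothesis must be pinned down and combined with continuity, taking care that regions on which $\sigma$ might equal $0.5$ do not create isolated components along which $K$ could jump. The cocycle-to-$\beta$ step is also mildly delicate, since $g$ is defined only on $P$-close pairs and $\beta$ must be built consistently along chains; but this is a standard kind of argument on a connected one-dimensional space and should not present serious difficulty once $K$ is known to be constant.
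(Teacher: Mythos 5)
Your route is the paper's route---the expected-gain identity is Lemma~\ref{exp-gain}, the specializations $y=x$ and $y=x^*$ are exactly Lemma~\ref{opntk-func}, the cocycle identity is Lemma~\ref{lemma:relation-xyz}, the chain-built bisector is Lemma~\ref{p_sep}, and the converse direction is Lemma~\ref{lemma:exist-op}---but the step you yourself flag as the main obstacle is genuinely open in your writeup, and the tool you propose for it does not work as stated. ``Continuity plus connectedness'' cannot by itself propagate the pointwise equalities $K(x,x)=K(x,x^*)$ to all $P$-close pairs, for two reasons. First, the model (Definition~\ref{def:ratingsystems}) assumes continuity only of $\sigma$, not of $\alpha$, so $K$ has no continuity you can lean on. Second, a nontrivial system can have $\sigma$ identically $0.5$ on an entire $P$-close region (say $\sigma(x,y)=0.5$ whenever $|x-y|\le 1$, with nontrivial behavior only at larger rating gaps); inside such a region your pointwise argument imposes no constraint whatsoever on $K$, so there is no supply of witnesses with $\sigma\ne 0.5$ to connect through, and $K$ could a priori vary there. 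The missing idea---the content of the paper's Lemma~\ref{lemma:constant-pk}---is to enlarge any locally trivial $P$-interval $(A,B)$: global nontriviality together with monotonicity and continuity of $\sigma$ yield $A'<A<B<B'$ with $0.5-P<\sigma(A',B')<0.5$, so $(A',B')$ is a \emph{nontrivial} $P$-interval containing $(A,B)$, and the pointwise argument applied there fixes $K$ on all of $(A',B')$, hence on $(A,B)$. Only with that in hand does a locally-constant-hence-constant argument on $\rr$ (via $x\mapsto K(x,x)$ and Lemma~\ref{lemma:pop_reals}, as in the paper's proof of Theorem~\ref{theorem:characterize_pop}) produce the single constant $C$.

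The other delicate step, assembling one bisector valid on all $P$-intervals simultaneously, is handled in the paper by an explicit extension along overlapping intervals with a supremum argument (Lemma~\ref{p_sep}); your chain-extension sketch is the same construction and is fair to call routine, though a complete writeup must show the chains exhaust $\rr$ and that overlapping extensions agree. Two smaller points: your assertion that nontriviality gives $C>0$ is not justified by the definitions (the paper makes the same silent assumption when it divides by $C$ in Lemma~\ref{lemma:relation-xyz}), and separability actually needs only that $K$ is constant on each individual opponent indifferent interval, which your pointwise step already provides on nontrivial intervals (trivial intervals are separable for free); the global constant is needed only for the ``$K$ is $P$ constant'' half of the conclusion.
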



It is relatively easy from this characterization theorem to show that $P$ opponent rating systems exist, and that they can exhibit full scale (even with respect to any given parameter $P$).\footnote{This follows e.g.\ as a consequence of our Corollary \ref{cor:sonaschar}, which implies that the Sonas rating system itself is $P$ opponent indifferent (and it has full scale); we discuss this more next.}
In the next part, we show that these desirable properties continue to hold even under a natural \emph{strengthening} of $P$ opponent indifference.

\subsubsection{Strong $P$ Opponent Indifference and Third Main Result}

$P$ opponent indifference requires indifference between any two opponents within threshold, \emph{assuming those opponents are correctly rated}.
One might want to more strictly require indifference between two opponents that are both \emph{incorrectly rated by the same amount}.
Formally:
\begin{definition} [Strong $P$ Opponent Indifference]\label{intro_spoi}
For a parameter $P$, a rating system is \textbf{strongly $P$ opponent indifferent} if there is a function $\gamma^* : \rr^3 \to \rr$ such that
$$\gamma(x, x^* \mid y, y+\delta) = \gamma^*(x, x^*,\delta)$$
for all $x, x^*, y,y+\delta$ with $x, y$ and $x^*, y+\delta$ both $P$-close.
\end{definition}

Note that $P$ opponent indifference is the special case where $\delta=0$.
Fortunately, this strengthening remains tenable, as shown by the following theorem:

\begin{theorem} [Third Main Result -- See Corollary \ref{cor:sonaschar}] For any $0 < P \le 0.5$ and any rating system $(\sigma, \alpha)$, the rating system is strongly $P$ opponent indifferent if and only if it is Sonas-like with a $P$ constant $K$-function.
\end{theorem}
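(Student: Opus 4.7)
The plan is to prove both directions of the equivalence by direct calculation, exploiting a clean simplification of $\gamma$ that arises under a constant $K$-function, and then invoking the Second Main Result for the harder direction to reduce the problem to a functional equation for the separating function $\beta$.

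The key simplification is that when $K(x, y) = C$ on $P$-close pairs, the identity $\alpha(x, y) = K(x, y)\sigma(y, x)$ (from the fairness axiom plus draw-freeness) together with cancellation of the $\sigma \cdot \sigma$ cross terms reduces $\gamma(x, x^* \mid y, y+\delta)$ to $C[\sigma(x^*, y+\delta) - \sigma(x, y)]$. For the easier direction ($\Leftarrow$), if $\sigma$ is Sonas-like with slope $a$ on a linear band containing the $P$-close region, then on $P$-close pairs this quantity equals $Ca(x^* - x - \delta)$, which depends only on $x, x^*, \delta$, so strong $P$ opponent indifference holds with $\gamma^*(x, x^*, \delta) = Ca(x^* - x - \delta)$.

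For the reverse direction ($\Rightarrow$), strong $P$ opponent indifference specialized to $\delta = 0$ yields $P$ opponent indifference, so the Second Main Result supplies a weakly increasing continuous $\beta$ (continuity inherited from continuity of $\sigma$) with $\sigma(x, y) = \beta(x) - \beta(y) + 0.5$ on $P$-close pairs, together with $K \equiv C$ on those pairs. Substituting into the simplified formula gives $\gamma = C[\beta(x^*) - \beta(x) + \beta(y) - \beta(y+\delta)]$, so strong $P$ opponent indifference becomes the condition that $\beta(y) - \beta(y+\delta)$ depend only on $\delta$ as $y$ ranges over the valid region.

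The main obstacle is promoting this local functional equation to global affinity of $\beta$. Fixing $x$ and specializing to $x^* = x$, I obtain $h_x(\delta) := \beta(y+\delta) - \beta(y)$ independent of $y$ in a $P$-close neighborhood $I_x$ of $x$. Telescoping gives additivity $h_x(\delta_1 + \delta_2) = h_x(\delta_1) + h_x(\delta_2)$ on a neighborhood of $0$, and continuity of $\beta$ combined with the standard continuous-Cauchy argument yields $h_x(\delta) = a_x \delta$, so $\beta$ is affine on $I_x$ with some slope $a_x$. Any two overlapping neighborhoods force their affine representatives to agree on the overlap (hence to share slope and intercept), and a compactness argument covering any segment of $\rr$ by finitely many such neighborhoods propagates this to a single global slope $a$, giving $\beta(z) = az + b$. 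Substituting back, $\sigma(x, y) = a(x - y) + 0.5$ on $P$-close pairs; choosing $s > 0$ small enough that $|x - y| \le s$ implies $P$-closeness (possible by continuity of $\sigma$) establishes the Sonas-like property, with $a > 0$ excluding the degenerate $\sigma \equiv 0.5$ case.
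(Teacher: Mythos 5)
Your proposal is correct and follows essentially the same route as the paper: specialize to $\delta=0$ to invoke the $P$ opponent indifference characterization (Theorem \ref{theorem:characterize_pop}), derive a Cauchy-type additivity equation for the bisector that with continuity gives local affinity, propagate a single slope across overlapping $P$-close neighborhoods (the content of Theorems \ref{sop-t}, \ref{sop-linear} and the patching argument in Theorem \ref{spop-char}), and verify the converse by the direct computation of Lemma \ref{lemma:sufficient-sop}. One small note: your final choice of $s$ (``$|x-y|\le s$ implies $P$-closeness'') needs the monotonicity of $\sigma$ together with the linearity just established on the $P$-close region to yield a uniform threshold over all of $\rr$, not continuity alone; this is the same level of detail as the paper's Lemma \ref{sonas-like}, so it is a presentation quibble rather than a gap.
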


Conceptually, we interpret this theorem as a significant technical point in favor of Sonas.
For games where Elo and Sonas are equally preferable in terms of accurately modeling the player pool, the Sonas model has the additional advantage of fighting opponent selection attacks, which may give a reason why it should be favored.

In Corollary \ref{cor:soiimpossible}, we also discuss the analogous strengthening of (general) opponent indifference, and we strengthen our impossibility result to show that no nontrivial strong (general) opponent indifferent rating systems exist, whether or not they satisfy full scale.

\subsection{Further Details on Previous Work \label{sec:priorwork}}

Here we supply a bit more technical detail on some prior work referenced in the above discussion.

\paragraph{Applications.}

The most direct and common use of rating systems is in \emph{matchmaking}, where a system tracks estimates of player strength in order to create well-balanced and engaging matches.
Some recent work has focused on \emph{team} matchmaking, updating estimates of player strength when the system must create two opposing \emph{groups} worth of players.
See \cite{AM17} and references within for examples.

Rating systems have also been applied in economic contexts, such as pricing algorithms \cite{YDM14}.
Agents in the economy are given ratings, and items they might want to purchase are also given ratings.
An item is offered to an agent, at a price determined by the rating of the item.
If the agent buys the item, then the item ``wins'' the interaction; its rating (i.e., the system's appraisal of its value) increases, while the agent's rating (i.e., the system's estimate of their stinginess) decreases.
Alternately, if the agent refuses the item, then the agent ``wins'' the interaction, and the item's rating decreases.
A related use case is item response theory \cite{Forisek09}, where the goal is to categorize the difficulty of tasks, e.g., answering various questions on a test.
Here we again view each agent answering each question as a game, where the agent ``wins'' the interaction if they get the question right, and the question ``wins'' the interaction if the agent gets it wrong.
The system then updates its beliefs about the agent's capabilities or the question's difficulty accordingly.

\paragraph{Related Strategyproof Rating Mechanisms.}

Closely related to \emph{rating systems} are \emph{ranking systems}, where the goal is simply to order agents by relative skill (construed broadly; e.g., this would include rankings of colleges by ``quality'').
Although ranking systems often work by assigning agents a numeric score that could be interpreted as a rating, the main difference from our work is that they are \emph{offline}: they post-process data into a ranking, rather than updating ratings in an \emph{online} fashion in response to newly arriving data.
Agents who are aware of a ranking algorithm may act strategically to optimize their position, and thus, incentive-compatibility is an important feature of these algorithms.
For some work in this area, see \cite{LGB22} and references therein.

In our work, we do not assume any prior knowledge of agent skill levels, and we do not necessarily assume that agents even know their own skill level.
If we assume that system or agents \emph{do} have some prior knowledge of ratings, we might want a mechanism that incentivizes them to report their beliefs truthfully.
There is a related line of work developing such mechanisms \cite{EGL22, Su21, Su22}.

\subsection{Open Problems and Future Directions}

We conclude our introduction by suggesting a few open problems arising from our work.
\begin{enumerate}
\item What systems can satisfy opponent indifference properties if we relax our model to allow rating systems that are not one-dimensional, memoryless, or zero-sum?

\item Can opponent selection attacks be avoided by systems that separately track a \emph{display rating}, which is visible to players and which they may try to maximize, from a \emph{matchmaking rating}, used internally by the system to adjust ratings?
How different would the display ratings and matchmaking ratings need to be?\footnote{We are grateful to an anonymous reviewer who suggested this open problem.}

\item Can opponent indifference properties be achieved simultaneously with other desirable properties of rating systems?
Other natural properties to consider might include fast convergence of current ratings to true ratings, or other notions of strategyproofness, like the one considered by Ebtekar and Liu \cite{EL21}.
\end{enumerate}

\section{Preliminaries} 

Some of our results in the introduction reference \emph{nontrivial} rating systems.
We define these formally:
\begin{definition}[Trivial and nontrivial]
A rating system is \textbf{trivial} if for all $x,y$, $\sigma(x,y) = 0.5$, or \textbf{nontrivial} otherwise.
\end{definition} 

In the introduction, we discuss versions of properties that only hold over a restricted domain of ratings (like $P$ opponent indifference, $P$ constant, etc).
In the technical part of this paper, we will need to similarly generalize other functions, and so the following (slightly informal) language will be helpful:
\begin{definition} [Property over $(A,B)$]
Given a property defined over a set of inputs in $\rr$, we say that a property holds \textbf{over $(A,B)$} if it holds whenever the inputs are taken in the interval $(A,B)$.
We will equivalently say that $(A, B)$ is a property interval for the rating system.

For example, a rating system is nontrivial over $(A, B)$ if there exist $x,y \in (A,B)$ with $\sigma(x,y) \ne 0.5$, and in this case we say that $(A, B)$ is a nontrivial interval for the rating system.
\end{definition}

The definition of the expected gain function in the introduction is phrased in a way that makes its intuitive meaning clear.
However, the following alternate characterization in terms of the $K$-function will be useful in some of our proofs.

\begin{lemma}\label{exp-gain}
The expected gain function for a rating system $(\sigma, \alpha)$ satisfies
\begin{align*}
    \gamma(x,x^* \mid y,y^*) = K(x,y)(\sigma(x^*,y^*)-\sigma(x,y))
\end{align*}
\end{lemma}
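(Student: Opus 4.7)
The plan is to prove this by directly expanding the definition of $\gamma$ and substituting expressions for $\alpha(x,y)$ and $\alpha(y,x)$ in terms of $K$ and $\sigma$, which come from rearranging the fairness axiom. This is essentially an algebraic calculation, so there is no deep obstacle; the only mild nuisance is bookkeeping to make sure the draw-free identity $\sigma(u,v) + \sigma(v,u) = 1$ is used in the right place to get things to collapse.

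First, I would rearrange the fairness axiom exactly as the paper already does in the discussion of $K$-functions, writing
\begin{align*}
\sigma(x,y) \;=\; \frac{\alpha(y,x)}{\alpha(x,y)+\alpha(y,x)} \;=\; \frac{\alpha(y,x)}{K(x,y)}.
\end{align*}
From this I obtain $\alpha(y,x) = K(x,y)\,\sigma(x,y)$, and then $\alpha(x,y) = K(x,y) - \alpha(y,x) = K(x,y)(1-\sigma(x,y))$. (Equivalently, by draw-free, $\alpha(x,y) = K(x,y)\sigma(y,x)$, but I only need the $1-\sigma(x,y)$ form.)

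Next I would substitute these two expressions for $\alpha(x,y)$ and $\alpha(y,x)$ into the definition
\begin{align*}
\gamma(x,x^* \mid y,y^*) = \alpha(x,y)\sigma(x^*,y^*) - \alpha(y,x)\sigma(y^*,x^*),
\end{align*}
replace $\sigma(y^*,x^*)$ by $1-\sigma(x^*,y^*)$ using draw-free, factor out $K(x,y)$, and simplify. The cross terms $\pm K(x,y)\sigma(x,y)\sigma(x^*,y^*)$ cancel, leaving exactly $K(x,y)\bigl(\sigma(x^*,y^*) - \sigma(x,y)\bigr)$, which is the claimed identity. Since every step is an equality justified by a definition or the fairness or draw-free axiom, nothing is the ``hard part''; the lemma is really just a convenient bookkeeping reformulation of $\gamma$ that trades away the $\alpha$'s in favor of $K$ and $\sigma$, and its value will be in later proofs where the $K,\sigma$ form is easier to manipulate than the raw $\alpha$ form.
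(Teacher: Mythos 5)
Your proposal is correct and follows essentially the same route as the paper's proof: expressing $\alpha(x,y)$ and $\alpha(y,x)$ via the $K$-function and $\sigma$ (using the fairness axiom), substituting into the definition of $\gamma$, applying the draw-free identity, and cancelling the cross terms. The only cosmetic difference is that you write $\alpha(x,y) = K(x,y)(1-\sigma(x,y))$ directly, whereas the paper writes it as $K(x,y)\sigma(y,x)$ and invokes draw-free one step later; the algebra is the same.
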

\begin{proof}
We compute:
\begin{align*}
\gamma(x,x^* \mid y,y^*) &= \alpha(x,y)\sigma(x^*,y^*)-\alpha(y,x)\sigma(y^*,x^*)\\
&= K(x,y)\sigma(y,x)\sigma(x^*,y^*)-K(y,x)\sigma(x,y)\sigma(y^*,x^*)\tag*{Def of $\alpha$ using $K$-function}\\
&= K(x,y)(\sigma(y,x)\sigma(x^*,y^*)-\sigma(x,y)\sigma(y^*,x^*))\tag*{$K$ is symmetric}\\
&= K(x,y)((1-\sigma(x,y))\sigma(x^*,y^*)-\sigma(x,y)(1-\sigma(x^*,y^*)))\tag*{$\sigma$ Draw-Free}\\
&= K(x,y)(\sigma(x^*,y^*)-\sigma(x,y)\sigma(x^*,y^*)-\sigma(x,y)+\sigma(x,y)\sigma(x^*,y^*))\\
&= K(x,y)(\sigma(x^*,y^*)-\sigma(x,y)). \tag*{\qedhere}
\end{align*}
\end{proof}
\section{Opponent Indifference}
In this section we will characterize all opponent indifferent rating systems.
Some of our lemmas are proved with extra generality ``over $(A,B)$"; this generality will become useful in the following section.

\subsection{Opponent Indifference vs.\ Full Scale\label{op-vs-fs}}

This section is about the incompatibility between the opponent indifference and full scale axioms discussed in the introduction.
Our main technical lemma in this section is Lemma \ref{opntk-func}, which shows that any nontrivial opponent indifferent rating system has a constant $K$-function; this is proved by analyzing the expected gain function under the restriction implied by opponent indifference.
We then use this result about the $K$-function to reveal the structure of the skill curve. This ultimately makes full-scale impossible (Theorem \ref{thm:nooifs}).
\begin{lemma}\label{all-half}
If a rating system is both nontrivial and opponent indifferent over $(A,B)$ where $\sigma(x,z) = 0.5$ and $\sigma(y,z) = 0.5$, then $\sigma(x,y) = 0.5$ for $x,y,z \in (A,B)$.
\end{lemma}
\begin{proof}
Opponent indifference means that the expected gain only depends on the first two inputs, so
\begin{align*}
\gamma(x,y\mid z,z) &= \gamma(x,y\mid x,x)\tag*{Opponent Indifference}\\
K(x,z)(\sigma(y,z)-\sigma(x,z)) &= K(x,x)(\sigma(y,x)-\sigma(x,x)) \tag*{Lemma \ref{exp-gain}}\\
K(z,x)(0.5-0.5) &= K(x,x)(\sigma(y,x)-0.5)\tag*{$\sigma$ Draw-Free}\\
0 &= K(x,x)(\sigma(y,x)-0.5)
\end{align*}
Since the $K$-function is always positive as rating systems are consequential, we must have $\sigma(x,y) = 0.5$.
\end{proof}
\begin{lemma}\label{op-trivial-base}
If a rating system is both nontrivial and opponent indifferent over $(A,B)$, then for all $x \in (A,B)$, there exists $y \in (A,B)$ such that $\sigma(x,y) \ne 0.5.$
\end{lemma}
\begin{proof}
Suppose there was a rating $x \in (A,B)$ such that $\sigma(x,y) = 0.5.$ for all $y \in (A,B)$. There exists $z,z^* \in (A,B)$ such that $\sigma(z,z^*) \ne 0.5$ by nontriviality. We then have $\sigma(z,x) = 0.5$ and $\sigma(z^*,x) = 0.5$. This implies $\sigma(z^*,z)= 0.5$ by Lemma \ref{all-half} which is a contradiction.
\end{proof}
\begin{lemma}\label{op-trivial-cond-constant}
If a rating system is both nontrivial and opponent indifferent over $(A,B)$, then for $x,y \in (A,B)$ such that $\sigma(x,y) \ne 0.5$, $K(x,x) = K(x,y)$.
\end{lemma}
\begin{proof}
Let $x,y \in (A,B)$ where $\sigma(x,y) \ne 0.5$.
From the definition of opponent indifference, the expected gain does not depend on its latter two parameters.
We therefore have
\begin{align*}
\gamma(x, y \mid x, x) &= \gamma(x, y \mid y, y) \tag*{Opponent Indifference}\\
K(x,x)(\sigma(y,x)-\sigma(x,x)) &= K(x,y)(\sigma(y,y)-\sigma(x,y)) \tag*{Lemma \ref{exp-gain}}\\
K(x,x)(\sigma(y,x)-0.5) &= K(x,y)(0.5-\sigma(x,y))\\
K(x,x)(1-\sigma(x,y)-0.5) &= K(x,y)(0.5-\sigma(x,y))\tag*{$\sigma$ Draw-Free}\\
K(x,x)(0.5-\sigma(x,y)) &= K(x,y)(0.5-\sigma(x,y))\\
K(x,x) &= K(x,y). \tag*{\qedhere}
\end{align*}
\end{proof}
\begin{lemma}\label{opntk-func}
If a rating system $(\sigma, \alpha)$ is opponent indifferent over $(A,B)$ and nontrivial over $(A,B)$, then $K(x,y)$ is constant over $(A,B)$.
\end{lemma}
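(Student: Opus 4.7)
The plan is to exploit opponent indifference by evaluating its invariance identity at two carefully chosen values of $y$, which reduces $K$ to a function on the diagonal; then to use nontriviality together with monotonicity of $\sigma$ to promote this diagonal function to a single constant on $(A,B)$; and finally to extend constancy to pairs with $\sigma(x,y)=0.5$.

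By Lemma~\ref{exp-gain}, opponent indifference over $(A,B)$ says that $K(x,y)\bigl(\sigma(x^*,y)-\sigma(x,y)\bigr)=\gamma^*(x,x^*)$ is independent of $y\in(A,B)$.  I would evaluate the right-hand side at $y=x$ and at $y=x^*$ (both lying in $(A,B)$) and apply the draw-free identity $\sigma(x^*,x)=1-\sigma(x,x^*)$; both expressions collapse to $(0.5-\sigma(x,x^*))$ times a single $K$-value, namely $K(x,x)$ in the first case and $K(x,x^*)$ in the second.  Cancelling whenever $\sigma(x,x^*)\ne 0.5$ yields the key identity $K(x,x^*)=K(x,x)$, and combining this with the symmetry $K(x,x^*)=K(x^*,x)$ gives $K(x,x)=K(x^*,x^*)$.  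Writing $k(x):=K(x,x)$, it remains to show that $k$ is constant on $(A,B)$ and that $K(x,y)$ equals the same constant whenever $\sigma(x,y)=0.5$.

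To pin down $k$, fix $a,b\in(A,B)$ with $\sigma(a,b)>0.5$ (from nontriviality); monotonicity of $\sigma$ forces $a>b$, and then $\sigma(x,y)\ge\sigma(a,b)>0.5$ for every $x\in[a,B)$, $y\in(A,b]$, so the identity pins $k$ to a single value $k_0$ on the ``corner set'' $(A,b]\cup[a,B)$.  For a middle point $z\in(b,a)$, monotonicity yields a dichotomy: either some $y\in(A,b]$ has $\sigma(z,y)>0.5$ or some $x\in[a,B)$ has $\sigma(x,z)>0.5$ (in either case the identity connects $k(z)$ to $k_0$), or else monotonicity and the draw-free rule together force $\sigma(z,\cdot)\equiv 0.5$ on all of $(A,B)$, making $z$ a \emph{globally neutral} point.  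A neutral $z$ can be used to kill $\gamma^*$ identically by substituting $y=z$ in opponent indifference; combined with nontriviality, the resulting invariant $K(x,y')\bigl(\sigma(x^*,y')-\sigma(x,y')\bigr)=0$ then collapses $K$ to $0$, which is constant.

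Finally, in the non-neutral case where $k\equiv k_0$, pairs with $\sigma(x,y)=0.5$ satisfy $\gamma^*(x,y)=0$, so the invariant becomes $K(x,z)\bigl(\sigma(y,z)-\sigma(x,z)\bigr)=0$ for every $z$; on the open set $\{z:\sigma(x,z)\ne 0.5\}$ we already know $K(x,z)=k_0$, which (assuming $k_0\ne 0$) forces $\sigma(y,z)=\sigma(x,z)$ there, and continuity plus monotonicity of $\sigma$ extend this equality to every $z\in(A,B)$.  Substituting back into the invariant and picking any $x^*$ with $\sigma(x^*,x)\ne 0.5$ (available because $x$ is non-neutral) gives $K(x,y)=k_0$.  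The most delicate step is the middle-region argument, where monotonicity must be pushed carefully enough to guarantee that any $z\in(b,a)$ not directly connectable via the identity is \emph{globally} neutral — so that the separate neutral-case reduction applies rather than leaving $k(z)$ undetermined.
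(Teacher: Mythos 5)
Your opening computation is precisely the paper's proof: evaluate the invariance at $y=x$ and $y=x^*$, use draw-freeness, and cancel the factor $0.5-\sigma(x,x^*)$ to obtain $K(x,x^*)=K(x,x)=K(x^*,x^*)$ whenever $\sigma(x,x^*)\ne 0.5$. The paper essentially stops there and asserts constancy; you go further and try to cover the pairs with $\sigma(x,y)=0.5$ and the diagonal values at points not directly linked to a nontrivial pair. Your corner-set argument and the per-point dichotomy (either $k(z)$ is linked to $k_0$, or $\sigma(z,\cdot)\equiv 0.5$ on $(A,B)$) are correct, and the non-neutral branch is essentially right up to two patchable points: the case $k_0=0$, which you assume away, is actually immediate (pick $x^*$ with $\sigma(x^*,y)\ne 0.5$; then $K(x,y)(\sigma(x^*,y)-0.5)=\gamma^*(x,x^*)=k_0(\sigma(x^*,x)-0.5)=0$ forces $K(x,y)=0=k_0$), and the ``continuity plus monotonicity'' extension does not by itself cover the case where $\sigma(x,\cdot)$ differs from $0.5$ only on one side of $x$; the clean fix is to also apply the invariant with the roles of $x$ and $y$ swapped, which gives $\sigma(x,z)=\sigma(y,z)$ on $\{z:\sigma(y,z)\ne 0.5\}$, while off the union of the two sets both values equal $0.5$, so no limiting argument is needed.

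The genuine gap is the neutral branch. From a globally neutral $z$ you correctly get $\gamma^*\equiv 0$, hence $K(x,y')\bigl(\sigma(x^*,y')-\sigma(x,y')\bigr)=0$ for all $x,x^*,y'$ in the interval, but this does not ``collapse $K$ to $0$'': it forces $K(x,y')=0$ only when some $x^*$ separates, i.e.\ when at least one of $x,y'$ is non-neutral. When both are neutral, the $\sigma$-difference factor vanishes identically, so $K(x,y')$ never appears with a nonzero coefficient in any opponent-indifference constraint and is completely unconstrained; it need not be $0$, nor equal to the value of $K$ elsewhere. In fact, since $\alpha$ is only required to be nonnegative (not positive or continuous), one can construct a monotone, continuous, draw-free $\sigma$ that is identically $0.5$ on a middle band of $(A,B)$ yet nontrivial between the two ends, and take $K$ to vanish everywhere except on band--band pairs: such a system is opponent indifferent and nontrivial over $(A,B)$ with a nonconstant $K$. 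So this branch cannot be closed as written; it requires an additional hypothesis (e.g.\ $K>0$ on the interval, or the absence of neutral points). To be fair, the paper's own proof silently skips exactly this situation --- it jumps from ``$K(x,y)=K(x,x)$ whenever $\sigma(x,y)\ne 0.5$'' to constancy --- so your write-up is more careful than the paper everywhere except at the one spot where the statement is genuinely delicate, and there your claimed reduction is not valid.
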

\begin{proof}
Let $x,y \in (A,B)$. We want to show that $K(x,x) = K(x,y)$. This means that the $K$-function depends only on its first input.
Since additionally $K$ is symmetric in its two parameters, it must in fact be constant over $(A,B)$. Lemma \ref{op-trivial-cond-constant} proves the statement in the case that $\sigma(x,y) \ne 0.5$, so we need only cover the case that  $\sigma(x,y) = 0.5$.

Lemma \ref{op-trivial-base} shows that there exists $z\in (A,B)$ such that $\sigma(x,z) \ne 0.5$. Lemma \ref{all-half} implies $\sigma(z,y) \ne 0.5$ by contrapositive as $\sigma(x,z) \ne 0.5$ and $\sigma(x,y) = 0.5$.
We can then evaluate
\begin{align*}
\gamma(x,z \mid y,y) &= \gamma(x,z \mid z,z)\\
K(x,y)(\sigma(z,y)-\sigma(x,y)) &= K(x,z)(\sigma(z,z)-\sigma(x,z)) \\
K(x,y)(\sigma(z,y)-0.5) &= K(x,z)(0.5-\sigma(x,z))
\end{align*}
and
\begin{align*}
\gamma(y,z \mid z,z) &= \gamma(y,z \mid x,x)\\
K(y,z)(\sigma(z,z)-\sigma(y,z)) &= K(y,x)(\sigma(z,x)-\sigma(y,x)) \\
K(y,z)(0.5 -\sigma(y,z)) &= K(y,x)(\sigma(z,x)-0.5)\\
K(y,z)(\sigma(z,y)- 0.5) &= K(y,x)(0.5 - \sigma(x,z))
\end{align*}
Since the $K$-function is symmetric, we let $a = K(x,y) = K(y,x)$ and by Lemma \ref{op-trivial-cond-constant}, we let $b = K(x,z) = K(y,z)$. We then let $c=(\sigma(z,y)- 0.5)$ and $d= (0.5 - \sigma(x,z))$, giving us
$$a \cdot c = b \cdot d$$
and 
$$b \cdot c = a \cdot d$$
The previous two equations imply that $a=b$, and thus $K(x,y) = K(x,z) = K(x,x)$ by Lemma \ref{op-trivial-cond-constant}.
\end{proof}
We comment that the nontriviality hypothesis in the previous lemmas is indeed necessary, since otherwise $K(x,y)$ can be essentially any symmetric function over $(A,B)$ and the expected gain will still be $0$.

\begin{lemma}\label{lemma:relation-xyz}
If a rating system $(\sigma, \alpha)$ is opponent indifferent over $(A,B)$, then
$$\sigma(x,y) = \sigma(x,z)-\sigma(y,z) + 0.5$$
for all $x,y,z \in (A,B)$.
\end{lemma}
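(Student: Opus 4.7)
The plan is to derive the identity algebraically from opponent indifference, via Lemmas~\ref{exp-gain} and~\ref{opntk-func}, after a careful choice of substitution. First I would dispose of the trivial case: if the system is trivial over $(A,B)$, then $\sigma(x,y) = \sigma(x,z) = \sigma(y,z) = 0.5$ on this interval, and the identity $0.5 = 0.5 - 0.5 + 0.5$ is immediate. So assume the rating system is nontrivial over $(A,B)$, and apply Lemma~\ref{opntk-func} to obtain a constant $C > 0$ with $K(x,y) = C$ for all $x,y \in (A,B)$.

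Next I would translate opponent indifference into a clean statement about $\sigma$ alone. By Lemma~\ref{exp-gain},
\[
\gamma(x, x^* \mid y, y) \;=\; K(x,y)\bigl(\sigma(x^*,y) - \sigma(x,y)\bigr) \;=\; C\bigl(\sigma(x^*,y) - \sigma(x,y)\bigr)
\]
for all $x, x^*, y \in (A,B)$. Opponent indifference asserts the left-hand side is independent of $y$, so the quantity $\sigma(x^*,y) - \sigma(x,y)$ is independent of $y$ over $(A,B)$.

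The key trick is to evaluate this $y$-invariant expression at a strategically chosen point. Given $x, y, z \in (A,B)$, apply the invariance with the role of $x^*$ played by $z$, comparing the value at input $y$ with the value at input $z$:
\[
\sigma(z,y) - \sigma(x,y) \;=\; \sigma(z,z) - \sigma(x,z) \;=\; 0.5 - \sigma(x,z),
\]
where the last equality uses the draw-free axiom $\sigma(z,z) = 0.5$. Substituting $\sigma(z,y) = 1 - \sigma(y,z)$ (again by draw-free) gives $1 - \sigma(y,z) - \sigma(x,y) = 0.5 - \sigma(x,z)$, which rearranges to the desired $\sigma(x,y) = \sigma(x,z) - \sigma(y,z) + 0.5$.

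There is no real obstacle here; the proof is essentially two rewrites once opponent indifference is restated as ``$\sigma(x^*,y) - \sigma(x,y)$ is independent of $y$.'' The only mildly nonobvious step is recognizing that setting the free variable equal to $z$ collapses one of the $\sigma$ terms to $0.5$ via draw-free, which is exactly what lets the identity emerge.
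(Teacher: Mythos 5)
Your proof is correct and follows essentially the same route as the paper: dispose of the trivial case, combine Lemma~\ref{exp-gain} with the constant $K$-function from Lemma~\ref{opntk-func} to see that $\sigma(x^*,y)-\sigma(x,y)$ is independent of $y$, then specialize the free variable so one term collapses to $0.5$. The only cosmetic difference is that you set $x^*=z$ (costing one extra draw-free rewrite) where the paper sets $x^*=y$.
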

\begin{proof}
If the system is trivial over $(A,B)$ then the lemma holds immediately, so assume nontriviality.
From the definition of opponent indifference the expected gain only depends on a player's current rating and true rating, $x$ and $x^*$, so long as all ratings are within the interval $(A,B)$.
So we have:
\begin{align*}
\gamma(x, x^* \mid y, y) &= \gamma(x, x^* \mid z, z) \tag*{Opponent Indifference}\\
    K(x,y) (\sigma(x^*,y)-\sigma(x,y)) &= K(x,z)  (\sigma(x^*,z)-\sigma(x,z)) \tag*{Lemma \ref{exp-gain}}\\
    C \cdot (\sigma(x^*,y)-\sigma(x,y)) &= C \cdot (\sigma(x^*,z)-\sigma(x,z)) \tag*{Lemma \ref{opntk-func}.}
\end{align*}
Now consider the possible setting $x^* = y$.
Under this, we continue:
\begin{align*}
C \cdot (\sigma(y,y)-\sigma(x,y)) &= C \cdot (\sigma(y,z)-\sigma(x,z))\\
0.5-\sigma(x,y) &= \sigma(y,z)-\sigma(x,z)\\
\sigma(x,y) &= \sigma(x,z)-\sigma(y,z) + 0.5. \tag*{\qedhere}
\end{align*}
\end{proof}
\begin{definition}[Separable]
A skill curve is \textbf{separable} if there is a function $\beta: \rr \to [C,C+0.5]$ for some constant C such that 
$$\sigma(x, y) = \beta(x) - \beta(y) + 0.5.$$
for all $x,y$. $\beta$ is called a \textbf{bisector} of the skill curve.
\end{definition}

Note that bisectors are not unique: if $\beta$ is a bisector for $\sigma$, then any vertical translation of $\beta$ is also a bisector.
However, all bisectors differ by this translation:

\begin{lemma} \label{lem:bisectorshift}
For any skill curve $\sigma$ that is separable over $(A,B)$, if $\beta, \beta'$ are both bisectors of $\sigma$ over $(A,B)$, then there is a constant $C$ such that $\beta'(x) = \beta(x) + C$ for all $x \in (A,B)$.
\end{lemma}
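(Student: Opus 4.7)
The plan is to exploit the defining equation of bisector directly: since both $\beta$ and $\beta'$ represent the same skill curve $\sigma$ via the separability formula, I can equate the two expressions and all dependence on $\sigma$ cancels out. Concretely, for any $x, y \in (A, B)$, I have both
\begin{align*}
\sigma(x, y) &= \beta(x) - \beta(y) + 0.5, \\
\sigma(x, y) &= \beta'(x) - \beta'(y) + 0.5,
\end{align*}
so subtracting gives $\beta(x) - \beta'(x) = \beta(y) - \beta'(y)$ for all $x, y \in (A, B)$.

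From here, I would define $h(x) := \beta'(x) - \beta(x)$ on $(A, B)$ and observe that the displayed equation states exactly $h(x) = h(y)$ for all such $x, y$. Thus $h$ is a constant function on $(A, B)$; calling that constant $C$, I conclude $\beta'(x) = \beta(x) + C$ for all $x \in (A, B)$, which is the claim.

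There is essentially no obstacle here: the separability equation is linear in the bisector, so uniqueness up to additive constant is immediate. The only thing to be careful about is keeping the derivation within the interval $(A, B)$, since both bisector equations are only assumed to hold there, but this is automatic because the argument only ever evaluates $\beta$ and $\beta'$ at points $x, y \in (A, B)$.
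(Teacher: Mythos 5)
Your proof is correct and is essentially the same as the paper's: both equate the two separability expressions for $\sigma(x,y)$ and conclude that $\beta'(x) - \beta(x)$ does not depend on $x$, so the two bisectors differ by a constant on $(A,B)$. The only cosmetic difference is that the paper fixes one reference point $y$ and sets $C := \beta'(y) - \beta(y)$, while you phrase it as constancy of the difference function $h$.
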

\begin{proof}
Fix an arbitrary $y \in (A, B)$.
We have
\begin{align*}
\sigma(x, y) = \beta'(x) - \beta'(y) + 0.5 = \beta(x)-\beta(y) + 0.5
\end{align*}
and so, rearranging, we get
\begin{align*}
\beta'(x) = \beta(x) + (\beta'(y) - \beta(y)).
\end{align*}
Now the claim follows by taking $C := \beta'(y) - \beta(y)$.
\end{proof}

\begin{lemma}\label{op-sep}
If a rating system $(\sigma,\alpha)$ is opponent indifferent over $(A,B)$, then $\sigma$ is separable over $(A,B)$ and all bisectors are continuous. 
\end{lemma}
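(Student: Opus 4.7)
The plan is to use Lemma \ref{lemma:relation-xyz} as the workhorse: it already says $\sigma(x,y) = \sigma(x,z) - \sigma(y,z) + 0.5$ on $(A,B)$, which is exactly the shape of a separable skill curve once we fix the third argument. So I would first dispose of the trivial case (where $\sigma \equiv 0.5$ on $(A,B)$, in which any constant $\beta$ serves as a bisector and is trivially continuous), and then focus on the nontrivial case.

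For the nontrivial case, the main step is to fix any reference rating $z_0 \in (A,B)$ and define $\beta(x) := \sigma(x, z_0)$ for $x \in (A,B)$. Applying Lemma \ref{lemma:relation-xyz} with $z = z_0$ immediately gives
\begin{align*}
\sigma(x,y) = \sigma(x,z_0) - \sigma(y,z_0) + 0.5 = \beta(x) - \beta(y) + 0.5
\end{align*}
for all $x, y \in (A,B)$, which is the defining equation of a bisector. I would then verify the range condition $\beta : \rr \to [C, C+0.5]$: since $\sigma$ takes values in $[0,1]$, the identity $\beta(x) - \beta(y) = \sigma(x,y) - 0.5$ forces the range of $\beta$ over $(A,B)$ to have diameter at most $0.5$, so it is contained in $[C, C+0.5]$ for $C := \inf_{x \in (A,B)} \beta(x)$ (or any slightly smaller number if the infimum is not attained). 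Outside $(A,B)$, $\beta$ can be extended in any way consistent with this range constraint since the separability statement is only required over $(A,B)$.

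For continuity, the key observation is that $\sigma$ is continuous by the definition of a rating system, so $\beta(x) = \sigma(x, z_0)$ is continuous in $x$ on $(A,B)$ as a section of a continuous function. To extend continuity to \emph{all} bisectors, I would invoke Lemma \ref{lem:bisectorshift}: any other bisector $\beta'$ of $\sigma$ over $(A,B)$ satisfies $\beta'(x) = \beta(x) + C'$ for some constant $C'$, and translation by a constant preserves continuity.

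There is no real obstacle here; the work has been essentially done by Lemmas \ref{lemma:relation-xyz} and \ref{lem:bisectorshift}. The only mildly delicate point is confirming the range condition $[C, C+0.5]$ in the definition of separable, which requires noting that the constraint $\sigma \in [0,1]$ forces the oscillation of any bisector to be at most $0.5$ — a short observation rather than a technical hurdle.
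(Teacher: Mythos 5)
Your proposal is correct and follows essentially the same route as the paper: fix a reference point $m=z_0\in(A,B)$, set $\beta(x):=\sigma(x,z_0)$, invoke Lemma \ref{lemma:relation-xyz} to get the bisector identity, use continuity of $\sigma$ for continuity of $\beta$, and Lemma \ref{lem:bisectorshift} to transfer continuity to all bisectors. Your extra checks (the trivial case and the range condition $[C,C+0.5]$, which indeed follows since $\beta(x)-\beta(y)=\sigma(x,y)-0.5\in[-0.5,0.5]$) are harmless details the paper leaves implicit.
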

\begin{proof}
Suppose $(\sigma,\alpha)$ is opponent indifferent over $(A,B)$. Let $m \in (A,B)$ be an arbitrary constant. From Lemma \ref{lemma:relation-xyz} we have
$$\sigma(x,y) = \sigma(x,m) - \sigma(y,m) + 0.5.$$
Therefore $\beta(x) := \sigma(x,m)$ is a bisector for $\sigma$ over $(A,B)$, and since $\sigma$ is continuous, $\beta$ is continuous as well.
Finally, Lemma \ref{lem:bisectorshift} implies that since one bisector is continuous, all bisectors are continuous.
\end{proof}

\begin{lemma}\label{lemma:exist-op}
If a rating system's $K$-function is constant over $(A,B)$ and the skill curve is separable over $(A,B)$ then the rating system is opponent indifferent over $(A,B)$.
\end{lemma}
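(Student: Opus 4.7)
The plan is to apply Lemma \ref{exp-gain} to rewrite $\gamma(x, x^* \mid y, y)$ in a form where the constancy of $K$ and the separability of $\sigma$ can be plugged in directly; the cancellation then yields a function that manifestly does not depend on $y$. This is the converse of the implications established in Lemmas \ref{opntk-func} and \ref{op-sep}, so I expect the calculation to be essentially mechanical, with no substantive obstacle.

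First, I would pick any $x, x^*, y \in (A, B)$ and invoke Lemma \ref{exp-gain} to write
\[
\gamma(x, x^* \mid y, y) \;=\; K(x, y)\bigl(\sigma(x^*, y) - \sigma(x, y)\bigr).
\]
Since the $K$-function is constant over $(A, B)$, I can replace $K(x, y)$ by the constant $C$. Since $\sigma$ is separable over $(A, B)$ with some bisector $\beta$, I can expand
\[
\sigma(x^*, y) - \sigma(x, y) \;=\; \bigl(\beta(x^*) - \beta(y) + 0.5\bigr) - \bigl(\beta(x) - \beta(y) + 0.5\bigr) \;=\; \beta(x^*) - \beta(x).
\]
The $\beta(y)$ and $0.5$ terms cancel, so
\[
\gamma(x, x^* \mid y, y) \;=\; C \cdot \bigl(\beta(x^*) - \beta(x)\bigr).
\]

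Finally, I would define $\gamma^*(x, x^*) := C(\beta(x^*) - \beta(x))$; the display above shows that $\gamma(x, x^* \mid y, y) = \gamma^*(x, x^*)$ for all $x, x^*, y \in (A, B)$, which is exactly opponent indifference over $(A, B)$. The only minor subtlety worth flagging is that $\gamma^*$ is well-defined even though the bisector $\beta$ is not unique, because by Lemma \ref{lem:bisectorshift} any two bisectors differ by an additive constant which cancels inside the difference $\beta(x^*) - \beta(x)$.
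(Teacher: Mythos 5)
Your proposal is correct and follows essentially the same route as the paper: apply Lemma \ref{exp-gain}, substitute the constant $K$ and the bisector expansion of $\sigma$, and observe that the $\beta(y)$ terms cancel so $\gamma(x, x^* \mid y, y) = C(\beta(x^*) - \beta(x))$ depends only on $x, x^*$. The remark about independence from the choice of bisector is a harmless extra observation not needed in the paper's argument.
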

\begin{proof}
Plugging the equations into the expected gain function for $x,x^*,y \in (A,B)$, we have
\begin{align*}
    \gamma(x, x^* \mid y, y) &= K(x,y) (\sigma(x^*,y)-\sigma(x,y)) \tag*{Lemma \ref{exp-gain}}\\
    &= C \cdot (\beta(x^*)-\beta(y)+0.5-(\beta(x)-\beta(y)+0.5)) \\
    &= C \cdot (\beta(x^*)-\beta(x)).
\end{align*}
Thus $\gamma$ depends only on its first two parameters $x$ and $x^*$, implying opponent indifference.
\end{proof}

Note that Lemma \ref{lemma:exist-op} implies that nontrivial opponent indifferent rating systems do indeed exist, as we can choose a constant $K$-function and suitable bisector function $\beta$ over the entire interval $(-\infty, \infty)$.
However, as discussed in the introduction, our next theorem shows that no opponent indifferent rating system can exhibit full scale:

\begin{theorem} \label{thm:nooifs}
No opponent indifferent rating system (over $(-\infty, \infty)$) has full scale.
\end{theorem}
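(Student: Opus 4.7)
The plan is to leverage separability (Lemma \ref{op-sep}) to turn the opponent indifference hypothesis into a concrete analytic obstruction: the bisector $\beta$ whose differences encode $\sigma$ is bounded, but a full-scale chain would force those differences to grow without bound.

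First, I would dispatch the trivial case: if $\sigma \equiv 0.5$, then $\sigma(r_i, r_{i-1}) = 0.5$ for any candidate chain, so no $p > 0.5$ can witness full scale. Hence we may assume the system is nontrivial on $(-\infty, \infty)$, and Lemma \ref{op-sep} yields a (continuous, weakly increasing by inspection of the definition) bisector $\beta : \rr \to \rr$ with $\sigma(x, y) = \beta(x) - \beta(y) + 0.5$ for all $x, y$. The crucial observation is then the Definition of separable itself: the codomain of $\beta$ is an interval of length $0.5$, say $[C, C+0.5]$. Equivalently, since $\sigma(x,y) \in [0,1]$ forces $\beta(x) - \beta(y) \in [-0.5, 0.5]$, the range of $\beta$ has diameter at most $0.5$.

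Next, I would suppose toward contradiction that full scale holds, fixing $p > 0.5$ and an ascending chain $r_1 < r_2 < \dots$ with $\sigma(r_i, r_{i-1}) \ge p$ for all $i \ge 2$. Substituting the separable form,
\begin{align*}
\beta(r_i) - \beta(r_{i-1}) \;=\; \sigma(r_i, r_{i-1}) - 0.5 \;\ge\; p - 0.5 \;>\; 0.
\end{align*}
Telescoping across $i = 2, \dots, n$ gives $\beta(r_n) - \beta(r_1) \ge (n-1)(p - 0.5)$, which tends to infinity with $n$. This directly contradicts the bound $\beta(r_n) - \beta(r_1) \le 0.5$ derived in the previous paragraph, completing the proof.

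There is no genuinely hard step; the only subtlety is noticing that separability bakes in a bounded-range constraint on $\beta$ (so that $\sigma$ stays in $[0,1]$), and that full scale is precisely the statement that the corresponding telescoping sum diverges. All the heavy lifting was done in Lemma \ref{op-sep}; the theorem follows by a one-line pigeonhole/telescoping argument.
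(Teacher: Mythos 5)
Your proposal is correct and takes essentially the same route as the paper: both invoke Lemma \ref{op-sep} to separate $\sigma$ via a bisector $\beta$, and your telescoping bound $\beta(r_n)-\beta(r_1)\ge (n-1)(p-0.5)$ contradicting the bounded range is exactly the paper's inductive identity $\sigma(r_N,r_1)=(N-1)p-\tfrac{N-2}{2}$ contradicting $\sigma\le 1$, just packaged as a telescoping sum instead of an induction. The only (cosmetic) differences are that you work directly with $\sigma(r_i,r_{i-1})\ge p$ as in Definition \ref{def:fullscale} and handle the trivial case separately, neither of which changes the substance.
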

\begin{proof}
Let $(\sigma,\alpha)$ be an opponent indifferent rating system and fix some $p \in (0.5,1)$.
Our plan is to prove that for some integer $N$ depending only on $p$, there does not exist a chain of ratings $r_1 < \dots < r_N$ such that for all $1 <i \le N$, we have $\sigma(r_i, r_{i-1}) = p$.
More specifically, our strategy is to prove that
\begin{equation}
\sigma(r_N,r_1) \ge (N-1)p - \frac{N-2}{2}. \label{eq:ind}
\end{equation}
Since $\sigma(r_N, r_1) \le 1$, this implies
\begin{align*}
(N-1)p - \frac{N-2}{2} \le 1\\
Np - p - \frac{N}{2} + 1 \le 1\\
N\left(p - \frac{1}{2}\right) \le p\\
N \le \frac{2p}{2p-1}
\end{align*}
which is an upper bound for $N$ depending only on $p$, as desired.
It now remains to prove equation (\ref{eq:ind}).
We do so by induction on $N$:
\begin{itemize}
\item (Base Case, $N = 1$) We have $\sigma(r_1,r_1) = 1/2$, as desired.
\item (Inductive Step)
By Lemma \ref{op-sep} the skill curve is separable, and so for some bisector $\beta$ we have
\begin{align*}
\sigma(r_{N+1},r_1) 
&= \sigma(r_{N+1},r_N) - \sigma(r_1,r_N) + 0.5 \tag*{Lemma \ref{lemma:relation-xyz}}\\
&\ge p - (1-\sigma(r_N,r_1)) + 0.5 \tag*{$\sigma$ Draw-Free}\\
&\ge p - \left(1-\left((N-1)p - \frac{N-2}{2}\right)\right) + 0.5 \tag*{Inductive Hypothesis}\\
&= p - 1 + (N-1)p - \frac{N-2}{2} + 0.5\\
&= Np -\frac{N-1}{2}
\end{align*}
which verifies (\ref{eq:ind}) for $N+1$. \qedhere
\end{itemize}
\end{proof}

We note the setting $N \le \frac{2p}{2p-1}$ that arises in this theorem, which for a given $p$ controls the maximum possible length of a skill chain that may exist in an opponent indifferent system.
In particular, plugging in $p=0.75$, we get $N \le \frac{1.5}{0.5} = 3$.

\subsection{Impossibility of Strong Opponent Indifference}

Here, we consider an even stronger notion of opponent indifference.
While opponent indifference enforces that we are indifferent between two \emph{correctly-rated} opponents, we could more strongly require indifference between two opponents \emph{misrated by the same amount}.
In particular, the following definition enforces indifference between opponents who are overrated by $\delta$:

\begin{definition} [Strong Opponent Indifference]
A rating system is \textbf{strongly opponent indifferent} if  there is a function $\gamma^* : \rr^3 \to \rr$ such that, for all $x, x^*, y, y+\delta$, we have
$$\gamma(x,x^*\mid y,y+\delta) = \gamma^*(x, x^*,\delta).$$
\end{definition}

A strongly opponent indifferent rating system is also opponent indifferent (by considering $\delta=0$), and thus by Theorem \ref{thm:nooifs}, it cannot exhibit full scale.
We will prove an even stronger impossibility theorem.
While this is a slight detour (as we have arguably already shown that strong opponent indifference is undesirable), the results in this section are used in the following section.
As before, we prove our intermediate lemmas over arbitrary intervals $(A, B)$, which provides flexibility that will be helpful later in the paper.

\begin{definition} [Translation Invariant] \label{def:translationinvariant}
A rating system $(\sigma,\alpha)$ is \textbf{translation invariant} if there is a function $\sigma^* : \rr \to [0, 1]$ with $\sigma(x, y) = \sigma^*(x-y)$ for all $x, y$.
\end{definition}
\begin{theorem}\label{sop-t}
If a rating system is strongly opponent indifferent over $(A,B)$, then it is translation invariant over $(A,B)$.
\end{theorem}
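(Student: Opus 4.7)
The plan is to derive from strong opponent indifference that the bisector $\beta$ guaranteed by Lemma~\ref{op-sep} has the ``finite-difference'' property: $\beta(y+\delta)-\beta(y)$ depends only on $\delta$ and not on $y$. Translation invariance of $\sigma$ then falls out immediately since $\sigma(x,y)=\beta(x)-\beta(y)+0.5$.

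To set up, I would first dispatch the trivial case: if $\sigma\equiv 0.5$ over $(A,B)$, translation invariance holds with $\sigma^*(d):=0.5$. In the nontrivial case, strong opponent indifference with $\delta=0$ specializes to ordinary opponent indifference, so Lemmas~\ref{opntk-func} and~\ref{op-sep} apply: the $K$-function is a constant $C$ over $(A,B)$, and there is a continuous bisector $\beta$ with $\sigma(x,y)=\beta(x)-\beta(y)+0.5$ for $x,y \in (A,B)$. Using Lemma~\ref{exp-gain}, the strong opponent indifference condition becomes
$$C\bigl(\beta(x^*)-\beta(y+\delta)-\beta(x)+\beta(y)\bigr)=\gamma^*(x,x^*,\delta)$$
for all $x, x^*, y, y+\delta \in (A,B)$. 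Assuming $C \neq 0$ (the degenerate $C=0$ case performs no rating updates on $(A,B)$, so $\sigma$ is essentially unconstrained and lies outside the scope of interest), the right-hand side's independence from $y$ forces $\beta(y+\delta)-\beta(y)$ to be independent of $y$. Denote this common value $h(\delta)$.

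With $h$ in hand, I would define $\sigma^*(d) := h(d) + 0.5$. For any $x, y \in (A, B)$, setting $\delta = x - y$ gives $y, y+\delta = x \in (A, B)$, so $h(x - y) = \beta(x) - \beta(y)$, and therefore $\sigma(x, y) = \sigma^*(x - y)$, which is exactly translation invariance. The main technical subtlety is domain bookkeeping: to conclude $y$-independence of $\beta(y+\delta) - \beta(y)$ for a given $\delta$, one must be able to vary $y$ over more than a single point while keeping $y, y+\delta \in (A, B)$. Since $(A, B)$ is open and $|\delta| < B - A$ whenever $\delta$ arises as a difference $x - y$ with $x, y \in (A, B)$, the range of valid $y$ is always a nondegenerate open subinterval, so the conclusion covers exactly the $\delta$'s needed to establish translation invariance on $(A,B)$.
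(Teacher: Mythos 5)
Your proof is correct and follows essentially the same route as the paper's: reduce to the nontrivial case, invoke Lemmas~\ref{exp-gain} and~\ref{opntk-func} (via the $\delta=0$ specialization), and exploit the $y$-independence of $\gamma(x,x^*\mid y,y+\delta)$ together with separability; phrasing the conclusion as $\beta(y+\delta)-\beta(y)$ being independent of $y$ is just the bisector form of the paper's identity $\sigma(x,x+\delta)=\sigma(x^*,x^*+\delta)$ obtained via Lemma~\ref{lemma:relation-xyz}. Your explicit caveat about $C\neq 0$ matches the paper's implicit division by $C$, so nothing is lost relative to the original argument.
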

\begin{proof}
If the system is trivial over $(A,B)$ then the lemma is immediate, so assume nontriviality over $(A,B)$.
Let $(\sigma,\alpha)$ be a strongly opponent indifferent rating system.
To show translation invariance over $(A,B)$, let $x, x^*, \delta$ be such that $x,x^*,x+\delta,x^*+\delta \in (A,B)$.
Since strong opponent indifference implies opponent indifference, Lemmas \ref{exp-gain} and \ref{opntk-func} apply.
We can therefore compute:
\begin{align*}
    \gamma(x, x^* \mid y, y+\delta) &= K(x,y) (\sigma(x^*,y+\delta)-\sigma(x,y)) \tag*{Lemma \ref{exp-gain}}\\
    &= C \cdot (\sigma(x^*,y+\delta)-\sigma(x,y)) \tag*{Lemma \ref{opntk-func}.}
\end{align*}

From the definition of strong opponent indifference, the expected gain function depends only on the first two parameters and the difference between its latter two parameters, and so we have
\begin{align*}
\gamma(x, x^* \mid x^*, x^*+\delta) &= \gamma(x, x^* \mid x, x+\delta)\\
C \cdot (\sigma(x^*,x^*+\delta)-\sigma(x,x^*)) &= C \cdot (\sigma(x^*,x+\delta )-\sigma(x,x)) \tag*{previous equation}\\
\sigma(x^*,x^* + \delta)-\sigma(x,x^*)&= \sigma(x^*,x + \delta )-0.5\\
\sigma(x,x^*) &= \sigma(x,x^* + \delta)- \sigma(x^*,x + \delta ) + 0.5\\
\sigma(x,x + \delta)- \sigma(x^*,x + \delta ) + 0.5 &= \sigma(x^*,x^* + \delta)- \sigma(x^*,x + \delta ) + 0.5 \tag*{Lemma \ref{lemma:relation-xyz}}\\
\sigma(x,x + \delta)&=  \sigma(x^*,x^* + \delta),
\end{align*}
which implies translation invariance over $(A, B)$.
\end{proof}


\begin{theorem}\label{sop-linear}
If a rating system $(\sigma, \alpha)$ is strongly opponent indifferent over $(A,B)$, then $\sigma$ has a bisector that is linear over $(A,B)$ (i.e., on the interval $(A, B)$ it coincides with a function of the form $\beta(x) = mx$ for some constant $m$).
\end{theorem}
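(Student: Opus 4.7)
The plan is to combine separability (which follows from opponent indifference via Lemma \ref{op-sep}) with translation invariance (Theorem \ref{sop-t}) to force a Cauchy-style functional equation on the bisector, and then use continuity to conclude linearity.

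First, since strong opponent indifference implies opponent indifference, Lemma \ref{op-sep} gives a continuous bisector $\beta$ with $\sigma(x,y) = \beta(x) - \beta(y) + 0.5$ for all $x,y \in (A,B)$. By Theorem \ref{sop-t}, the system is translation invariant over $(A,B)$, so there is a function $\sigma^*$ with $\sigma(x,y) = \sigma^*(x-y)$. Equating the two expressions yields
$$\beta(x) - \beta(y) = \sigma^*(x-y) - 0.5,$$
so $\beta(x) - \beta(y)$ depends only on the difference $x-y$ whenever $x, y \in (A,B)$.

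Next, fix an arbitrary $x_0 \in (A,B)$ and define $g(h) := \beta(x_0 + h) - \beta(x_0)$ on $I := (A - x_0,\, B - x_0)$. The invariance above implies $\beta(x+h) - \beta(x) = g(h)$ whenever $x, x+h \in (A,B)$. For any sufficiently small $h_1, h_2$ with $h_1, h_2, h_1+h_2 \in I$ and $x_0 + h_1 \in (A,B)$, I can split a telescoping sum:
$$g(h_1 + h_2) = [\beta(x_0 + h_1 + h_2) - \beta(x_0 + h_1)] + [\beta(x_0 + h_1) - \beta(x_0)] = g(h_2) + g(h_1).$$
Thus $g$ satisfies Cauchy's additive functional equation in a neighborhood of $0$. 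Since $\beta$ is continuous, so is $g$, and the standard theorem on continuous additive functions gives a constant $m$ with $g(h) = mh$ on that neighborhood. This extends to all of $I$ by the rescaling trick $g(h) = n \cdot g(h/n) = mh$ for $n$ chosen large enough that $h/n$ lies in the neighborhood of $0$ where linearity is known.

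Consequently $\beta(x) = \beta(x_0) + m(x - x_0) = mx + c$ on $(A,B)$ for some constant $c$. Applying Lemma \ref{lem:bisectorshift}, the shifted function $\beta'(x) := \beta(x) - c = mx$ is also a bisector of $\sigma$ over $(A,B)$, and it is linear as required. The main obstacle is the Cauchy-equation step: one must be careful about the domain constraints (ensuring that the three points $x_0$, $x_0 + h_1$, $x_0 + h_1 + h_2$ all stay in $(A,B)$), but the continuity of $\beta$ inherited from Lemma \ref{op-sep} rules out pathological solutions and the extension-by-subdivision argument handles the boundary issues cleanly.
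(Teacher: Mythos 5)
Your proposal is correct and follows essentially the same route as the paper: combining separability from Lemma \ref{op-sep} with translation invariance from Theorem \ref{sop-t} to obtain a Cauchy additive equation, then using continuity to force linearity. The only differences are cosmetic (you work with the increment function $g(h)=\beta(x_0+h)-\beta(x_0)$ and handle the asymmetric interval by subdivision, whereas the paper normalizes to a symmetric interval with $\beta(0)=0$ and shows the bisector itself is additive, deferring the continuity argument to its appendix).
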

\begin{proof}
By considering an appropriate horizontal translation of our rating system, we may assume without loss of generality that our opponent indifferent interval is symmetric about the origin, i.e., it has the form $(-A, A)$.
Let $\beta$ be an arbitrary function that bisects $\sigma$ over $(-A, A)$, and further assume without loss of generality that $\beta$ intersects the origin, i.e., $\beta(0)=0$.
Our goal is to prove that, for all $x, y \in (-A, A)$, we have $\beta(x+y) = \beta(x) + \beta(y)$.
Since we already have that $\beta$ is continuous, this implies linearity of $\beta$ over $(-A, A)$ (see Lemma \ref{app:linearize} in Appendix \ref{A} for details).

Let $x, y, x+y \in (-A, A)$.
We then have
\begin{align}
    \sigma(x+y,x) = f(y) \tag*{Theorem \ref{sop-t}}\\
    \beta(x+y)-\beta(x) = f(y)\tag*{Lemma \ref{op-sep}}\\
    \beta(x+y) = \beta(x)+f(y) \label{eq:xdelta}.
\end{align}
Additionally, by reversing the roles of $x$ and $y$, we have
\begin{align}
    \beta(x+y)= f(x)+\beta(y) \label{eq:deltax}.
\end{align}
Thus, combining (\ref{eq:xdelta}) and (\ref{eq:deltax}), we get
\begin{align*}
    f(x) + \beta(y) &= \beta(x) + f(y).
\end{align*}
It follows from equation (\ref{eq:xdelta}) that $f(0) = 0$, and we recall that $\beta(0) = 0$.
Therefore, plugging $y=0$ into the previous equation, we get
\begin{align*}
    f(x) + \beta(0) &= \beta(x) + f(0)\\
    f(x)  &= \beta(x)
\end{align*}
and so, recombining with (\ref{eq:xdelta}), we have $\beta(x+y) = \beta(x)+\beta(y)$, as desired.
\end{proof}

\begin{corollary} \label{cor:soiimpossible}
A strongly opponent indifferent rating system (over $(-\infty, \infty)$) must be trivial.
\end{corollary}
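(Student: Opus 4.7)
The plan is to simply instantiate Theorem \ref{sop-linear} on the whole real line and then use the fact that the skill curve $\sigma$ takes values in $[0,1]$ to pin down the slope of the bisector.

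More concretely: suppose $(\sigma,\alpha)$ is strongly opponent indifferent over $(-\infty,\infty)$. Applying Theorem \ref{sop-linear} with $(A,B) = (-\infty, \infty)$, there is a bisector $\beta$ of $\sigma$ that is linear on all of $\mathbb{R}$, so $\beta(x) = mx$ for some $m \in \mathbb{R}$. By separability (Lemma \ref{op-sep}), this bisector witnesses
\[
\sigma(x,y) = \beta(x) - \beta(y) + 0.5 = m(x-y) + 0.5 \quad \text{for all } x,y \in \mathbb{R}.
\]

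The final step is to observe that $\sigma$ is constrained to take values in $[0,1]$. If $m \neq 0$, then as $x-y$ ranges over all of $\mathbb{R}$, the quantity $m(x-y) + 0.5$ is unbounded above and below, contradicting $\sigma(x,y) \in [0,1]$. Therefore $m = 0$, which forces $\sigma(x,y) = 0.5$ for every $x,y$, i.e.\ the rating system is trivial.

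There is essentially no obstacle here: the work has already been done in Theorem \ref{sop-linear}, and the only additional ingredient is that linearity on an unbounded domain is incompatible with a bounded codomain unless the slope is zero. This also highlights the role of the ``over $(A,B)$'' generality developed in the previous lemmas: the impossibility is a global phenomenon, arising precisely because there is no threshold past which the linear behavior is allowed to flatten out.
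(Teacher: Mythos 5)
Your proof is correct and takes essentially the same route as the paper: instantiate Theorem \ref{sop-linear} over $(-\infty,\infty)$ to get a globally linear bisector, then argue that unboundedness of a nonzero-slope linear function contradicts a boundedness constraint, forcing $\sigma \equiv 0.5$. The only cosmetic difference is that the paper invokes the bounded range $[C, C+0.5]$ of a bisector (from the definition of separable) while you invoke the codomain $[0,1]$ of $\sigma$ directly; these are interchangeable instances of the same step.
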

\begin{proof}
If a rating system is strongly opponent indifferent, then by Theorem \ref{sop-linear} there exists a bisector $\beta$ that is linear over $(-\infty, \infty)$.
Additionally, the range of $\beta$ is contained within $[C,C+0.5]$ for some constant $C$.
The only such functions are those with slope $0$.
Thus
$$\sigma(x,y) = 0 - 0 + 0.5 = 0.5$$
for all $x,y \in \rr$, and so it is trivial.
\end{proof}

\section{$P$ Opponent Indifference}


Recall that $P$ opponent indifference is a relaxation of (full) opponent indifference, in which we only require indifference among opponents who are reasonably evenly matched.

\begin{definition} [$P$ Opponent Indifference]\label{def:poi}
For $P \in (0,0.5]$, a rating system is \textbf{$P$ opponent indifferent} if the rating system is opponent indifferent over $(A,B)$ for all $A<B$ where\footnote{Technically, this definition differs slightly from Definition \ref{pop_def} of partial opponent indifference in the introduction, because it forces all pairs among $x, x^*, y$ to be $P$-close, whereas Definition \ref{pop_def} only explicitly forces $x, y$, and $x^*, y$ to be $P$-close.  However, it follows as an easy corollary of the results in this section that the two definitions are equivalent (see Lemma \ref{app:poi} in Appendix \ref{A} for details).}
$$\sigma(A, B) > 0.5-P.$$
\end{definition}

\subsection{Characterization of $P$ Opponent Indifference}

Our next goal is to prove the following characterization theorem:
\begin{theorem}[Characterization of $P$ Opponent Indifferent Rating Systems]\label{theorem:characterize_pop}
A nontrivial rating system is $P$ opponent indifferent if and only if the skill curve is $P$ separable and the $K$-function is $P$ constant.
\end{theorem}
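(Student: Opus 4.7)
The plan is to prove both directions of the characterization. The reverse direction is the easier one: assuming $\sigma$ is $P$ separable and $K$ is $P$ constant, fix any interval $(A,B)$ with $\sigma(A,B) > 0.5-P$ (i.e., an interval on which $P$ opponent indifference must hold). Monotonicity of $\sigma$ forces every pair $(x,y)\in(A,B)^2$ to be $P$-close: for $x\le y$ we have $\sigma(x,y)\ge\sigma(A,B)>0.5-P$ and $\sigma(x,y)\le\sigma(y,y)=0.5$, and the case $x>y$ is symmetric via the draw-free axiom. Consequently $K$ is constant on $(A,B)^2$ and the separability identity $\sigma(x,y)=\beta(x)-\beta(y)+0.5$ holds throughout $(A,B)^2$, so Lemma \ref{lemma:exist-op} yields opponent indifference over $(A,B)$, as required.

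For the forward direction, assume $(\sigma,\alpha)$ is nontrivial and $P$ opponent indifferent. The key localization step is that for every $P$-close pair $(x,y)$, continuity of $\sigma$ lets me thicken the witness: for $x\le y$ there is some $\epsilon>0$ with $\sigma(x-\epsilon,y+\epsilon)>0.5-P$, so the open interval $(A,B):=(x-\epsilon,y+\epsilon)$ is a $P$ opponent-indifferent interval containing both $x$ and $y$. Applying Lemmas \ref{opntk-func} and \ref{op-sep} on such intervals (using global nontriviality plus continuity of $\sigma$ to guarantee that a nontrivial enlargement exists, which is needed to invoke Lemma \ref{opntk-func}) yields a local $K$-constant and a local continuous bisector on each.

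The remaining task, which I expect to be the main technical obstacle, is to stitch these local pieces into a single global constant $C$ and a single weakly increasing global bisector $\beta:\rr\to\rr$. For $K$: two overlapping opponent-indifferent intervals must assign the same local constant, since they agree on any common pair; then, since $\rr$ is covered by diagonal neighborhoods (each $(x,x)$ is trivially $P$-close, so it lies in some opponent-indifferent interval) and $\rr$ is connected, a finite chain of overlapping intervals joins any two local constants, giving a single $C$ attained by $K$ on every $P$-close pair. For the bisector, fix a reference rating $x_0$ and set $\beta(x_0):=0$. For any other $x$, choose a finite chain $x_0=r_0,r_1,\dots,r_n=x$ whose consecutive pairs lie in a common opponent-indifferent interval (by sufficiently fine subdivision of $[x_0,x]$), and define
$$\beta(x) := \sum_{i=0}^{n-1} \bigl(\sigma(r_{i+1},r_i)-0.5\bigr).$$
Well-definedness, i.e., invariance under chain choice and refinement, reduces to the three-point identity $\sigma(a,c)=\sigma(a,b)+\sigma(b,c)-0.5$ from Lemma \ref{lemma:relation-xyz}, applied interval-by-interval inside any common refinement. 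Monotonicity of $\beta$ is inherited from that of $\sigma$, and the identity $\sigma(x,y)=\beta(x)-\beta(y)+0.5$ for $P$-close $(x,y)$ follows by taking the chain from $y$ to $x$ to be a single step inside the opponent-indifferent interval provided by the localization step.
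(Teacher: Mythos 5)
Your proof is correct, and the overall skeleton matches the paper's: the reverse direction is exactly the paper's appeal to Lemma \ref{lemma:exist-op} (with the monotonicity argument that all pairs in $(A,B)$ are $P$-close spelled out), and your treatment of the $K$-function is the paper's in substance --- local constancy from Lemma \ref{opntk-func}, the enlargement trick for $P$-intervals on which $\sigma$ happens to be trivial (this is precisely the paper's Lemma \ref{lemma:constant-pk}), and then ``locally constant on connected $\rr$ implies constant,'' which the paper phrases via $f(x)=K(x,x)$ having $f'\equiv 0$ and you phrase via chains of overlapping intervals. Where you genuinely diverge is the global bisector. The paper's Lemma \ref{p_sep} builds $\beta$ by an iterative extension process: fix $\beta(0)=0$, repeatedly extend along the maximal $P$-interval around the supremum of the currently defined domain, and rule out a finite supremum by an open-set contradiction. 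You instead define $\beta(x)$ directly as a telescoping sum $\sum_i(\sigma(r_{i+1},r_i)-0.5)$ over a fine subdivision of $[x_0,x]$ and reduce well-definedness to the three-point identity of Lemma \ref{lemma:relation-xyz}. This is a legitimate alternative: it makes the additive (cocycle) structure explicit and replaces the paper's supremum argument with a refinement argument, at the cost of the well-definedness check. To make that check airtight you should (i) restrict to monotone subdivisions of $[x_0,x]$, where ``common refinement'' literally means the union of the two point sets and each inserted point lies between its neighbors, hence in the same opponent-indifferent interval, so the identity applies --- for arbitrary zigzag chains a common refinement need not exist, and you would first remove interior local extrema using the same identity together with the fact that any point between two points of a $P$-interval lies in that interval; (ii) justify that a sufficiently fine subdivision exists via compactness of $[x_0,x]$ and Lemma \ref{lemma:pop_reals} (a Lebesgue-number argument); and (iii) note that the final identity $\sigma(x,y)=\beta(x)-\beta(y)+0.5$ for a $P$-close pair follows by inserting $y$ (or $x$) as a subdivision point, which well-definedness permits. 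These are routine fill-ins, not gaps in the idea.
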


We work towards a proof with some intermediate structural lemmas.

\begin{lemma}\label{lemma:pop_reals}
If a rating system $(\sigma,\alpha)$ is $P$ opponent indifferent then every real number is within an opponent indifferent interval.
\end{lemma}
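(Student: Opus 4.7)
The plan is to exploit the continuity of $\sigma$ together with the fact that $\sigma(r,r) = 0.5$ for every real $r$. Given any target real number $r$, I want to exhibit an interval $(A, B)$ containing $r$ such that $\sigma(A, B) > 0.5 - P$, because then Definition~\ref{def:poi} immediately gives opponent indifference over $(A, B)$.

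First, I would record the fact that $\sigma(r, r) = 0.5$ for all $r \in \rr$: this comes straight from the draw-free axiom $\sigma(x,y) + \sigma(y,x) = 1$ applied with $x = y = r$. Since $P > 0$, this means $\sigma(r, r) = 0.5 > 0.5 - P$, so the target inequality $\sigma(A, B) > 0.5 - P$ is satisfied (with slack $P$) in the degenerate case $A = B = r$.

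Next, I would invoke continuity of $\sigma$ at the point $(r, r)$ to upgrade this to an open condition. By continuity, there exists $\eps > 0$ such that whenever $|x - r| < \eps$ and $|y - r| < \eps$, we have $|\sigma(x,y) - 0.5| < P$. In particular, setting $A := r - \eps/2$ and $B := r + \eps/2$, we obtain $A < r < B$ and $\sigma(A, B) > 0.5 - P$. Applying Definition~\ref{def:poi} to this pair $(A, B)$ yields that the rating system is opponent indifferent over $(A, B)$, and by construction $r \in (A, B)$.

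There is no real obstacle here; the only subtlety is to be sure we are reading Definition~\ref{def:poi} correctly, namely that the hypothesis ``$\sigma(A,B) > 0.5 - P$'' is exactly the condition that all pairs inside $(A,B)$ are $P$-close (since $\sigma$ is weakly increasing in the first argument and weakly decreasing in the second, $\sigma(A,B)$ is the minimum of $\sigma$ on $(A,B) \times (A,B)$, and symmetrically $\sigma(B,A) = 1 - \sigma(A,B) < 0.5 + P$ is the maximum). So continuity at the diagonal suffices, and the lemma follows in a few lines.
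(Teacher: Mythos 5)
Your proof is correct and matches the paper's argument: both use continuity of $\sigma$ at the diagonal point $(r,r)$, where $\sigma(r,r)=0.5>0.5-P$, to produce an interval $(r-\eps, r+\eps)$ with $\sigma(r-\eps, r+\eps) > 0.5-P$, which is an opponent indifferent interval by Definition \ref{def:poi}. The extra remark about monotonicity is harmless but not needed, since the definition only asks for the single inequality $\sigma(A,B) > 0.5-P$.
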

\begin{proof}
Let $b\in \rr$. Since $\sigma$ is continuous, there exists $0<\epsilon$ such that $0.5\ge\sigma(b-\epsilon, b+\epsilon)> 0.5-P$. Thus $(b-\epsilon, b+\epsilon)$ is an opponent indifferent interval.
\end{proof}

\begin{lemma}\label{p_sep}
If a rating system $(\sigma,\alpha)$ is $P$ opponent indifferent, then $\sigma$ is $P$ separable with a continuous bisector.
\end{lemma}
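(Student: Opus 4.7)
The plan is to construct a global continuous, weakly increasing bisector $\beta : \rr \to \rr$ by summing local $\sigma$-increments along chains of $P$-close consecutive points. The key ingredients are Lemma \ref{lemma:pop_reals} (every real lies in an opponent indifferent interval) and Lemma \ref{lemma:relation-xyz} (which enables consistent chain refinement).

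First, fix $0 \in \rr$ as a reference point. For any $x \in \rr$, I would verify that a finite chain $0 = r_0, r_1, \ldots, r_n = x$ exists with each consecutive pair $r_i, r_{i+1}$ being $P$-close. Continuity of $\sigma$ gives, for each real $r$, a neighborhood $(r - \delta, r + \delta)$ on which $\sigma(r - \delta, r + \delta) > 0.5 - P$; compactness of the closed interval between $0$ and $x$ then extracts a finite such chain.

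Next, I would define $\beta(0) := 0$ and $\beta(x) := \sum_{i=0}^{n-1} [\sigma(r_{i+1}, r_i) - 0.5]$ along any such chain. The main obstacle is showing this sum is independent of the chain chosen. The strategy is to prove that the elementary refinement step---inserting a new point $z$ strictly between consecutive chain points $r_i, r_{i+1}$---preserves the sum, then iterate to conclude that any two chains admit a common (merge-sorted) refinement yielding identical sums. Since $r_i, r_{i+1}$ are $P$-close, continuity of $\sigma$ provides an opponent indifferent interval $(A, B)$ containing $\{r_i, z, r_{i+1}\}$, so Lemma \ref{lemma:relation-xyz} applied to the triple $(r_{i+1}, r_i, z)$ combined with draw-freeness yields $\sigma(r_{i+1}, r_i) - 0.5 = [\sigma(r_{i+1}, z) - 0.5] + [\sigma(z, r_i) - 0.5]$, precisely the identity needed.

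Finally, I would check the three required properties of $\beta$. The defining equation $\sigma(x, y) = \beta(x) - \beta(y) + 0.5$ on $P$-close pairs follows by extending a chain $0 \to y$ by one additional step $y \to x$, giving $\beta(x) - \beta(y) = \sigma(x, y) - 0.5$. Continuity of $\beta$ at $x$ follows because for $x'$ sufficiently near $x$ the pair is $P$-close, so $\beta(x') - \beta(x) = \sigma(x', x) - 0.5 \to 0$ as $x' \to x$. For monotonicity, any $P$-close $y \le x$ satisfies $\sigma(x, y) \ge \sigma(y, y) = 0.5$ by monotonicity of $\sigma$ in its arguments, so $\beta(x) \ge \beta(y)$; chaining $P$-close steps extends this to arbitrary $y \le x$. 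The hardest step will be well-definedness via common refinement; the rest is routine once the correct triple is chosen for Lemma \ref{lemma:relation-xyz}.
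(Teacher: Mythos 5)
Your proposal is correct, but it takes a genuinely different route from the paper. You build the bisector bottom-up: $\beta(x)$ is a telescoping sum of increments $\sigma(r_{i+1},r_i)-0.5$ along a finite chain of $P$-close points from $0$ to $x$ (obtained by compactness), with well-definedness established through a common-refinement argument whose elementary step is exactly the identity from Lemma \ref{lemma:relation-xyz} applied to the triple $(r_{i+1},r_i,z)$; the defining relation, continuity, and monotonicity of $\beta$ then fall out directly. The paper instead argues top-down: it fixes $\beta(0)=0$, repeatedly extends the partially-defined bisector to the maximal $P$-interval around the current supremum of its domain (using the local separability and uniqueness of bisectors up to translation from Lemmas \ref{op-sep} and \ref{lem:bisectorshift}), and shows via an openness/supremum contradiction that every real is reached after finitely many extensions. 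Your approach is more explicit and self-contained---it yields the formula for $\beta$, makes the verification of the $P$-separability identity for arbitrary $P$-close pairs and the weak monotonicity required by the definition completely transparent, and avoids the topological covering argument---at the cost of having to do the refinement bookkeeping yourself; the paper's approach reuses already-proved local structure so it needs almost no new computation, but hides the gluing in the iteration argument. Two small points you gesture at but should state explicitly: when you insert $z$ between $r_i<r_{i+1}$ (or merge two monotone chains), each new consecutive pair stays $P$-close because $\sigma$ is weakly increasing in its first argument and weakly decreasing in its second, so $\sigma(u,v)\ge\sigma(r_i,r_{i+1})>0.5-P$ for $r_i\le u<v\le r_{i+1}$; and Lemma \ref{lemma:relation-xyz} needs all three points strictly inside a single opponent indifferent interval, which you get by enlarging $(r_i,r_{i+1})$ to $(r_i-\epsilon,r_{i+1}+\epsilon)$ using continuity of $\sigma$ and the strictness of $\sigma(r_i,r_{i+1})>0.5-P$.
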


Let us quickly discuss this lemma statement before we begin its proof.
By Lemma \ref{op-sep}, for each $P$-interval $(A, B)$, there exists a continuous function $\beta$ that bisects $\sigma$ on this interval.
But this lemma is claiming something stronger: that there is a \emph{single} continuous function $\beta$ that bisects $\sigma$ on \emph{all} $P$-intervals simultaneously.
This strengthening requires a bit of topology to prove, but is overall fairly straightforward.

\begin{proof} [Proof of Lemma \ref{p_sep}]
We construct a bisector $\beta$ as follows.
First, we arbitrarily fix a point; say, $\beta(0)=0$, and all other values of $\beta$ are currently undefined.
Then, iterate the following process.
Let $x$ be the current supremum of the points on which $\beta$ has been defined.
Let $\eps>0$ be the largest value such that $(x-\eps, x+\eps)$ is a $P$-interval.
Using Lemma \ref{op-sep}, there exists an infinite family of bisectors over $(x-\eps, x+\eps)$, which are vertical translations of each other.
However, since $(x-\eps, x+\eps)$ intersects at least one previously-defined point, only one bisector in this family is consistent with the previously-selected values of $\beta$.
We may extend $\beta$ to the interval $(x-\eps, x+\eps)$ using this particular bisector, and repeat infinitely.

We claim that, for every nonnegative number $r$, there is a constant $c_r$ such that the value of $\beta(r)$ is defined after finitely many iterations of this process.
To see this, let $D \subseteq \rr$ be the set of points with this property, and suppose for contradiction that the supremum of $D$ is a finite real number $r^*$.
Note that $D$ is the union of open intervals, and thus $D$ is open, which means it does not contain its supremum.
Since $r^* \notin D$, there is no constant $c_{r^*}$ for which $\beta(r^*)$ is defined after $c_{r^*}$ iterations.
Additionally, since $r^*$ is the supremum of $D$, any interval of the form $(r^* - \eps, r^* + \eps)$ intersects $D$.
Let $r \in (r^* - \eps, r^* + \eps) \cap D$.
Then after $c_r+1$ iterations, we would define $\beta(r^*)$.
Since $c_r+1$ is finite, this implies $r^* \in D$, which completes the contradiction.

By a symmetric process, we may then extend $\beta$ to negative inputs.
\end{proof}

\begin{lemma}\label{lemma:constant-pk}
For a nontrivial $P$ opponent indifferent rating system, every opponent indifferent interval $(A,B)$ has a constant $K$-function over $(A,B)$.
\end{lemma}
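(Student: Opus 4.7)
The plan is to split on whether the opponent indifferent interval $(A, B)$ is itself nontrivial. In the easy case where there exist $x, y \in (A, B)$ with $\sigma(x, y) \ne 0.5$, Lemma \ref{opntk-func} applies directly, since $(A, B)$ is by hypothesis opponent indifferent and is now also nontrivial over itself. The real content of the lemma is therefore in the trivial sub-case, where $\sigma \equiv 0.5$ on $(A, B)$.

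For that sub-case, the plan is to enlarge $(A, B)$ to a strictly bigger opponent indifferent interval $(A, y^*)$ that \emph{is} nontrivial over itself, and then recover the conclusion by applying Lemma \ref{opntk-func} to $(A, y^*)$ and restricting. To perform the enlargement, first invoke Lemma \ref{p_sep} to fix a single global continuous bisector $\beta$, which is weakly increasing by the definition of $P$ separable. Since $(A, B)$ is trivial on itself, $\beta$ is constant on $(A, B)$, say with value $c$. Global nontriviality of the rating system forces $\beta$ to be nonconstant on $\rr$, and by weak monotonicity this means, without loss of generality, that there is some $z > B$ with $\beta(z) > c$ (the left-side case being symmetric).

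Now let $B^* := \sup\{ t \ge B : \beta \equiv c \text{ on } [B, t]\}$, which is finite by the previous sentence. By continuity and monotonicity of $\beta$, we have $\beta(B^*) = c$ and $\beta(t) > c$ for every $t > B^*$. Using continuity once more, choose $y^* > B^*$ close enough to $B^*$ that $0 < \beta(y^*) - c < P$. Then $(A, y^*)$ is an opponent indifferent interval because $\sigma(A, y^*) = c - \beta(y^*) + 0.5 > 0.5 - P$, so $P$ opponent indifference applies to it; and it is nontrivial on itself because any $x' \in (A, B)$ and $y'' \in (B^*, y^*)$ satisfy $\sigma(x', y'') = c - \beta(y'') + 0.5 < 0.5$. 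Applying Lemma \ref{opntk-func} to $(A, y^*)$ then yields that $K$ is constant on $(A, y^*) \supseteq (A, B)$.

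The main obstacle is the trivial sub-case: we need to convert the global nontriviality assumption into a concrete nontrivial pair that sits inside a $P$-interval enclosing $(A, B)$. The key trick is to look at the right edge $B^*$ of the plateau of $\beta$ sitting on top of $(A, B)$: by continuity, we can slip just past $B^*$ into a region where $\beta > c$ while keeping the total rise of $\beta$ from $A$ to the new right endpoint below $P$, which is exactly what is needed to keep the enlarged interval within the $P$-threshold.
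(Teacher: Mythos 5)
Your overall strategy --- dispose of the sub-case where $(A,B)$ is nontrivial over itself via Lemma \ref{opntk-func}, and otherwise enlarge $(A,B)$ to a bigger opponent indifferent interval that \emph{is} nontrivial --- is the same as the paper's. However, your enlargement step is circular as written. Lemma \ref{p_sep} gives $P$ separability, and by definition the identity $\sigma(x,y)=\beta(x)-\beta(y)+0.5$ is guaranteed \emph{only} for $P$-close pairs, i.e.\ pairs with $\sigma(x,y)\in(0.5-P,0.5+P)$. When you assert that ``$(A,y^*)$ is an opponent indifferent interval because $\sigma(A,y^*)=c-\beta(y^*)+0.5>0.5-P$,'' you are invoking the bisector formula on the pair $(A,y^*)$ in order to prove the very inequality $\sigma(A,y^*)>0.5-P$ that is needed before the formula may be applied to that pair; a priori, $\beta(y^*)$ being close to $c$ tells you nothing about $\sigma(A,y^*)$, since $\beta$ carries no information about $\sigma$ outside the $P$-close regime. (A milder instance of the same issue is the unproved claim that global nontriviality forces $\beta$ to be nonconstant: one first needs a $P$-close pair with $\sigma\ne 0.5$, which takes a short continuity/intermediate-value argument.)

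The gap is repairable, but the repair essentially amounts to the paper's proof, which works with $\sigma$ directly instead of with $\beta$: by triviality and continuity $\sigma(A,B)=0.5$, by global nontriviality and monotonicity some enlargement of $(A,B)$ has $\sigma<0.5$, and the intermediate value theorem then produces an enlarged pair $A'\le A<B<B'$ with $0.5-P<\sigma(A',B')<0.5$; such an interval is an opponent indifferent interval by Definition \ref{def:poi}, is nontrivial over itself by continuity, and Lemma \ref{opntk-func} finishes. In your setup the missing step is to first show $\sigma(A,B^*)>0.5-P$ \emph{without} the bisector: if instead $\sigma(A,B^*)\le 0.5-P$, then by the intermediate value theorem applied to $y\mapsto\sigma(A,y)$ on $[B,B^*]$ there is $y_0$ with $\sigma(A,y_0)\in(0.5-P,0.5)$; that pair \emph{is} $P$-close, so the bisector identity forces $\sigma(A,y_0)=c-c+0.5=0.5$, a contradiction. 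Only after this can you use continuity to pick $y^*>B^*$ with $\sigma(A,y^*)>0.5-P$, making $(A,y^*)$ a legitimate $P$-interval; your subsequent computation exhibiting a nontrivial pair $(x',y'')$ inside it then goes through, since monotonicity makes that pair $P$-close.
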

\begin{proof}
The rating system is either nontrivial or trivial over $(A,B)$. If the rating system is nontrivial, the result is immediate by Lemma \ref{opntk-func}.
Thus the remaining case is when the system is trivial over $(A, B)$, but nontrivial overall.

We have $\sigma(A, B) = 0.5$.
Since $\sigma$ is not identically $0.5$, there exist $B' > B, A' < A$ such that $\sigma(A', B') < 0.5$.
Moreover, since $\sigma$ is continuous, we may specifically choose values $A', B'$ satisfying
$$0.5-P < \sigma(A', B') < 0.5.$$
So $(A', B')$ is a $P$-interval, and $\sigma$ is nontrivial over $(A', B')$.
By Lemma \ref{opntk-func}, we thus have a constant $K$-function over $(A', B')$.
Since $(A, B) \subseteq (A', B')$, the lemma follows.
%
%
\end{proof}
We are now ready to prove Theorem \ref{theorem:characterize_pop}.
\begin{proof} [Proof of Theorem \ref{theorem:characterize_pop}]
($\longrightarrow$) Lemma \ref{p_sep} implies that $\sigma$ is $P$ separable, so it only remains to show that the $K$-function is $P$ constant.
We prove this in two steps.
First, let $f(x) = K(x, x)$, and we will prove that $f(x)$ is constant.
Since every $x$ lies in an opponent indifferent interval (Lemma \ref{lemma:pop_reals}), by Lemma \ref{lemma:constant-pk} $f$ is constant on that interval.
In particular this implies $f$ is differentiable at $x$, with $f'(x)=0$, and thus $f$ is constant.
Finally: for any $x, y$ in the same opponent indifferent interval $(A, B)$, by Lemma \ref{lemma:constant-pk} we have $K(x, y) = K(x, x)$, and thus the $K$-function is $P$ constant.

($\longleftarrow$) This direction is immediate from Lemma \ref{lemma:exist-op}.
\end{proof}

\subsection{Characterization of Strong $P$ Opponent Indifference}
In this section, we will show how the strong version of P opponent indifference characterizes Sonas-like curves.
We begin with an auxiliary lemma:

\begin{lemma}\label{lemma:sufficient-sop}
If a rating system's $K$-function is constant over $(A,B)$ and the skill curve is separable over $(A,B)$ with a linear bisector, then the rating system is strongly opponent indifferent over $(A,B)$.
\end{lemma}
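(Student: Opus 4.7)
The plan is to unfold the expected gain function using Lemma \ref{exp-gain} and then directly substitute the two hypotheses, hoping that the $y$-dependence vanishes cleanly. Since we are working inside a single interval $(A,B)$ where both structural assumptions hold, this should be a routine calculation rather than a topological or inductive argument.

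Concretely, fix $x, x^*, y, y+\delta \in (A,B)$. By Lemma \ref{exp-gain} we have
\[
\gamma(x, x^* \mid y, y+\delta) \;=\; K(x,y)\,\bigl(\sigma(x^*, y+\delta) - \sigma(x,y)\bigr).
\]
By hypothesis, $K(x,y) = C$ for some constant $C$ on $(A,B)$, and $\sigma$ admits a bisector $\beta$ that is linear on $(A,B)$, i.e.\ $\beta(t) = mt + c$ for constants $m, c$ on that interval. Substituting the separability identity $\sigma(u,v) = \beta(u) - \beta(v) + 0.5$ gives
\[
\sigma(x^*, y+\delta) - \sigma(x, y) \;=\; \bigl(\beta(x^*) - \beta(y+\delta)\bigr) - \bigl(\beta(x) - \beta(y)\bigr).
\]
Using linearity of $\beta$, the terms $\beta(y+\delta) - \beta(y) = m\delta$ collapse, and we obtain
\[
\gamma(x, x^* \mid y, y+\delta) \;=\; C\,\bigl(\beta(x^*) - \beta(x) - m\delta\bigr) \;=\; Cm\,(x^* - x - \delta).
\]

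This final expression depends only on $x$, $x^*$, and $\delta$, so defining $\gamma^*(x, x^*, \delta) := Cm(x^* - x - \delta)$ witnesses strong opponent indifference over $(A,B)$. There is no real obstacle here: the entire argument is substitution, and the ``hard'' content, namely that linearity of $\beta$ is what allows the $\beta(y+\delta) - \beta(y)$ difference to be $y$-independent, is exactly the content built into the hypothesis. The lemma is essentially the converse direction that will combine with Theorem \ref{sop-linear} to yield the Sonas characterization.
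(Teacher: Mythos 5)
Your proof is correct and follows essentially the same route as the paper's: unfold $\gamma$ via Lemma \ref{exp-gain}, substitute the constant $K$-function and the separable skill curve, and use linearity of the bisector to cancel the $y$-dependence, leaving a function of $x, x^*, \delta$ only. The only cosmetic difference is that you allow an affine bisector $\beta(t)=mt+c$ and simplify all the way to $Cm(x^*-x-\delta)$, whereas the paper writes the result as $C(\beta(x^*)-\beta(x)-\beta(\delta))$; the substance is identical.
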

\begin{proof}
Plugging the equations into the expected gain function for $x,x^*,y,y+\delta \in (A,B)$, we have
\begin{align*}
    \gamma(x, x^* \mid y, y+\delta) &= K(x,y) (\sigma(x^*,y+\delta)-\sigma(x,y)) \tag*{Lemma \ref{exp-gain}}\\
    &= C \cdot (\beta(x^*)-\beta(y+\delta)+0.5-(\beta(x)-\beta(y)+0.5)) \\
    &= C \cdot (\beta(x^*)-\beta(x) - \beta(\delta)) 
\end{align*}
which depends only on $x, x^*$, and $\delta$.
\end{proof}

\begin{definition}[Strong $P$ Opponent Indifference]\label{def:spoi}
For $P \in (0,0.5]$, a rating system is \textbf{strongly $P$ opponent indifferent} if the rating system is strongly opponent indifferent over $(A,B)$ for all A and B where\footnote{Similar to $P$ opponent indifference, this definition is equivalent to the one given in the introduction, despite looking slightly different (see Lemma \ref{app:spoi} in Appendix \ref{A} for details).}
$$\sigma(A, B) > 0.5-P.$$
\end{definition}

\begin{theorem}[Characterization of Strongly $P$ Opponent Indifferent Rating Systems]\label{spop-char}
A nontrivial continuous rating system $(\sigma, \alpha)$ is strongly $P$ opponent indifferent if and only if the skill curve is $P$ separable with a linear bisector and the $K$-function is $P$ constant.
\end{theorem}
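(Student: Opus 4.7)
The plan is to prove the two directions of the biconditional separately, leveraging Theorem \ref{theorem:characterize_pop} for most of the structural content and using the local-to-global machinery from the previous subsection to handle linearity.

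The backward direction ($\leftarrow$) is essentially a packaging of Lemma \ref{lemma:sufficient-sop}. Given that $\sigma$ is $P$ separable with a linear bisector $\beta$ and $K$ is $P$ constant, fix any $P$-interval $(A,B)$, that is, any $A<B$ with $\sigma(A,B) > 0.5 - P$. On $(A,B)$, the restriction of $\beta$ remains linear and witnesses separability, and $K$ is constant by assumption, so Lemma \ref{lemma:sufficient-sop} yields strong opponent indifference on $(A,B)$. Since this holds for every $P$-interval, the rating system is strongly $P$ opponent indifferent by definition.

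For the forward direction ($\rightarrow$), I first observe that strong $P$ opponent indifference implies $P$ opponent indifference by setting $\delta = 0$ in the definition. Theorem \ref{theorem:characterize_pop} therefore delivers $P$ separability of $\sigma$ and $P$ constancy of $K$ for free, so the only remaining task is to upgrade the bisector to a linear one. Let $\beta$ be the global continuous bisector supplied by Lemma \ref{p_sep}. On each $P$-interval $(A,B)$ the hypothesis gives strong opponent indifference over $(A,B)$, so Theorem \ref{sop-linear} produces some bisector on $(A,B)$ that is linear; by Lemma \ref{lem:bisectorshift} any two bisectors over $(A,B)$ differ by an additive constant, so $\beta$ itself is affine on $(A,B)$, of the form $m_{(A,B)}\, x + c_{(A,B)}$.

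The one non-routine step, and the main obstacle, is globalizing these per-interval affine representations into a single affine formula. If two $P$-intervals $I_1$ and $I_2$ have nonempty intersection, then on $I_1 \cap I_2$ two affine functions coincide on an open set, which forces equal slopes and equal intercepts. So it suffices to show that any two $P$-intervals can be connected by a finite chain of pairwise-overlapping $P$-intervals. This follows from connectedness of $\rr$: by Lemma \ref{lemma:pop_reals} the $P$-intervals form an open cover of $\rr$, and the equivalence relation ``connected by a chain of overlapping $P$-intervals'' has open equivalence classes, so by connectedness of $\rr$ there is only one class. Concretely, any apparent gap between two non-overlapping $P$-intervals contains a point that by Lemma \ref{lemma:pop_reals} lies in a third $P$-interval that straddles and bridges the two. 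Propagating the common slope and intercept across such chains shows that $\beta$ coincides with a single affine function on all of $\rr$, hence is a linear bisector in the sense required. Combined with $P$ constancy of $K$ from Theorem \ref{theorem:characterize_pop}, this completes the forward direction.
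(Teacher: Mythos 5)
Your proposal is correct, and both directions match the paper's high-level decomposition: the backward direction is Lemma \ref{lemma:sufficient-sop} applied interval-by-interval, and the forward direction reduces to Theorem \ref{theorem:characterize_pop} (via $\delta = 0$) plus an upgrade of the bisector to a linear one using Theorem \ref{sop-linear}. Where you genuinely diverge is in the globalization step. The paper fixes a nontrivial strongly opponent indifferent interval $(a-\eps, a+\eps)$, argues its slope is positive, exhibits an overlapping interval $(a, a+2\eps)$ on which the slope must agree, and then ``continues this argument, repeatedly adding $\eps$'' in both directions to conclude the slope of $\beta$ is globally constant. You instead note via Lemma \ref{lem:bisectorshift} that the single continuous bisector $\beta$ of Lemma \ref{p_sep} is affine on every $P$-interval, that two affine functions agreeing on a nonempty open overlap share slope and intercept, and that the $P$-intervals cover $\rr$ (Lemma \ref{lemma:pop_reals}) with chain-connected equivalence classes that are open, so connectedness of $\rr$ forces a single affine formula. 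Your route is arguably cleaner and more robust: it needs no explicit $\eps$-stepping, it propagates intercepts as well as slopes through the fixed global $\beta$, and it handles slope-zero (locally trivial) regions uniformly, whereas the paper's stepping argument leans on nontriviality of the starting interval and leaves the iteration somewhat informal. One small finishing touch you should make explicit: your argument produces $\beta(x) = mx + c$, and ``linear bisector'' in the sense of Theorem \ref{sop-linear} means a bisector of the form $x \mapsto mx$; since any vertical translation of a bisector is again a bisector, subtracting $c$ completes the claim, exactly as the paper notes at the end of its proof.
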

\begin{proof} ($\longrightarrow$)
Assume the rating system is strongly $P$ opponent indifferent. Because strongly $P$ opponent indifferent rating systems are also $P$ opponent indifferent, we already have that the skill curve is $P$ separable and the $K$-function is $P$ constant by Theorem \ref{theorem:characterize_pop}.
It only remains to prove is that the skill curve has a linear bisector.

The skill curve has a continuous bisector $\beta$ by Lemma \ref{p_sep}.
Recall from Theorem \ref{sop-linear} that, for any strongly opponent indifferent interval $(A, B)$, $\sigma$ is separable on this interval with a bisector that is linear over the interval.
Moreover, the slope of $\beta$ on this interval must be $\frac{\sigma(B,A) - 0.5}{B-A}$.

From Lemma \ref{lemma:pop_reals} and the fact that the rating system is nontrivial, we know there exists some $0<\epsilon$ and $a \in \rr$ such that $(a-\epsilon,a+\epsilon)$ is a nontrivial strongly opponent indifferent interval.
Since the bisector is linear and nontrivial over $(a-\epsilon,a+\epsilon)$, we know $ 0.5-P< \sigma(a-\epsilon,a+\epsilon) < 0.5$, and thus the slope of $\beta$ is positive over $(a-\epsilon,a+\epsilon)$.
Because of this, there must be some value $c$ where $\sigma(a,c) = \sigma(a-\epsilon,a+\epsilon)$. Thus $(a-\epsilon,a+\epsilon)$ and $(a,c)$ are overlapping strongly opponent indifferent intervals so $\beta$ must have the same slope over both intervals. In order for this to be the case, $c = a+2\eps$.

We can continue this argument, repeatedly adding $\eps$, to show that $\beta$ has the same slope as the interval $(a-\eps,a+\eps)$ for all $r>a$. We can also make a similar argument subtracting $\eps$ each time to show that the bisector has the same slope as the interval $(a-\eps,a+\eps)$ for $r<a$. Thus, the bisector's slope is constant. Since any vertical translation of $\beta$ is also a bisector, there exists a linear bisector.

($\longleftarrow$) This direction follows directly from Lemma $\ref{lemma:sufficient-sop}$.
\end{proof}

\begin{definition}[$P$ Translation Invariant]
A rating system $(\sigma,\alpha)$ is \textbf{$P$ translation invariant} if for every $x,y \in \rr$ where $\sigma(x,y) \in (0.5-P,0.5+P)$, we have $\sigma(x,y) = \sigma^*(x-y)$ for some $\sigma^*: \rr \to [0,1]$.
\end{definition}

\begin{lemma}\label{sonas-like}
A skill curve is Sonas-like with parameters $a, s$ if and only if it is $\sigma(s,0) - 0.5$ separable with $\beta(x) = ax$.
(Sonas-like skill curves are defined in Definition \ref{def:sonaslike}.)
\end{lemma}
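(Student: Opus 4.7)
The plan is to prove the two directions of the equivalence separately, with the forward direction being a direct substitution and the backward direction relying on a continuity-based supremum extension argument.

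For the forward direction, I would first observe that Sonas-likeness with parameters $(a, s)$ plugged in at $(s, 0)$ gives $\sigma(s, 0) = as + 0.5$, so $P := \sigma(s, 0) - 0.5 = as$. To verify $P$-separability with $\beta(x) = ax$, I would take any $P$-close pair $(x, y)$, i.e., $\sigma(x, y) \in (0.5 - P, 0.5 + P)$. Comparing against the endpoint values $\sigma(y + s, y) = 0.5 + P$ and $\sigma(y - s, y) = 0.5 - P$ supplied by the Sonas-like formula, weak monotonicity of $\sigma$ in both arguments forces $|x - y| < s$, and then the Sonas-like formula gives $\sigma(x, y) = a(x - y) + 0.5 = \beta(x) - \beta(y) + 0.5$. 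Since $a > 0$, $\beta(x) = ax$ is automatically weakly increasing.

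For the backward direction, assume $\sigma$ is $P$-separable with $\beta(x) = ax$ and $P = \sigma(s, 0) - 0.5$, and fix $y \in \rr$. Setting $g(t) := \sigma(y + t, y)$, I would introduce $T := \sup\{t^* \ge 0 : g(t) = at + 0.5 \text{ for all } t \in [0, t^*]\}$. Continuity of $\sigma$ places $g$ inside the open $P$-close band for small $t$, so $P$-separability forces the linear formula there, giving $T > 0$. The key extension step: if $T < P/a$, then $g(T) = aT + 0.5 < 0.5 + P$, and continuity keeps $g(t)$ inside the open band for $t$ slightly above $T$, so $P$-separability extends the formula past $T$, contradicting the supremum. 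Hence $T \ge P/a$, meaning $\sigma(y + t, y) = at + 0.5$ on $[0, P/a]$. The identity $\sigma(P/a, 0) = 0.5 + P = \sigma(s, 0)$, combined with weak monotonicity of $\sigma$ in the first argument and the formula on $[0, P/a]$, is then used to identify $P/a$ with $s$. A symmetric argument via the draw-free axiom handles negative $t$, completing Sonas-likeness with parameters $(a, s)$.

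The main obstacle is the final extension step in the backward direction: the supremum argument naturally terminates at $t = P/a$, because at that point $g$ sits at the boundary of the open $P$-close band and $P$-separability gives no further leverage. Closing the gap from $P/a$ up to $s$ is precisely what the defining identity $P = \sigma(s, 0) - 0.5$ is for, since it forces $\sigma(P/a, 0)$ and $\sigma(s, 0)$ to coincide and lets monotonicity bridge the two values. This is the one place in the argument where the specific choice of $P$ in the lemma statement plays an essential role.
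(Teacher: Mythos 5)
Your forward direction is correct and is essentially the paper's (very terse) argument made explicit: since $\sigma(y+s,y)=as+0.5=0.5+P$ and $\sigma(y-s,y)=0.5-P$ for every $y$, weak monotonicity forces any $P$-close pair to satisfy $|x-y|<s$, where the Sonas formula gives $\beta(x)-\beta(y)+0.5$ with $\beta(x)=ax$. Your supremum/continuity argument in the backward direction is also sound as far as it goes, and is in fact more careful than the paper's proof (which jumps from translation invariance inside the $P$-close band to the formula for all $|x-y|<s$): it correctly yields $\sigma(y+t,y)=at+0.5$ for all $t\in[-P/a,\,P/a]$ and all $y$, i.e.\ that $\sigma$ is Sonas-like with parameters $(a,\,P/a)$.

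The gap is the final step identifying $P/a$ with $s$. From $\sigma(P/a,0)=0.5+P=\sigma(s,0)$ together with weak monotonicity you can only conclude $s\ge P/a$ (if $s<P/a$, the formula on $[0,P/a]$ would give $\sigma(s,0)=as+0.5<0.5+P$, a contradiction); weak monotonicity is not injectivity, so it cannot rule out $s>P/a$ with $\sigma(\cdot,0)$ constant on $[P/a,s]$. In fact the implication you are trying to establish is false for such parameters: let $\sigma(x,y)$ be $a(x-y)+0.5$ clamped to the interval $[0.5-P,\,0.5+P]$, and pick any $s>P/a$. This is a valid (continuous, monotone, draw-free) skill curve, $\sigma(s,0)-0.5=P$, and $\sigma$ is $P$ separable with $\beta(x)=ax$, yet $\sigma(s,0)=0.5+P\ne as+0.5$, so it is not Sonas-like with parameters $(a,s)$ — only with $(a,\,P/a)$. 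So no argument can close this step from the stated hypotheses; it is precisely the point where the lemma's backward direction needs $s\le P/a$, an assumption the paper's own proof makes silently when it asserts that every pair with $|x-y|<s$ is $P$-close. What your argument actually proves, Sonas-likeness with threshold $P/a$, is all that is used downstream in Corollary \ref{cor:sonaschar}, but it is not the literal statement of the lemma, and presenting the monotonicity step as if it pins down $s$ is a genuine error rather than a fixable omission.
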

\begin{proof}
($\longrightarrow$) This direction follows from Definition \ref{def:sonaslike}, as $\sigma(s+x,x)$ is constant for all x.\\
($\longleftarrow$) From the definition of $P$ separable and linear bisectors, for all $x,x+\delta$ where $\sigma(x,x+\delta) \in (0.5-P,0.5+P),$ we have
$$\sigma(x,x+\delta) = \beta(x)-\beta(x+\delta) + 0.5 = -\beta(\delta)+0.5,$$
and so the rating system is $P$ translation invariant. Thus, for all $|x-y| < s$ we have $\sigma(x,y) = ax-ay + 0.5$.
Since $\sigma$ is continuous, it additionally holds that $\sigma(x,y) = ax-ay + 0.5$ when $|x-y| = s$.
\end{proof}

\begin{corollary} \label{cor:sonaschar}
A rating system is strongly $P$ opponent indifferent if and only if it is Sonas-like with a $P$ constant $K$-function.
\end{corollary}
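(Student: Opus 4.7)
The plan is to derive this corollary as a direct consequence of Theorem \ref{spop-char} combined with Lemma \ref{sonas-like}. Theorem \ref{spop-char} already characterizes strong $P$ opponent indifference as the conjunction of two conditions: (a) the skill curve is $P$ separable with a linear bisector, and (b) the $K$-function is $P$ constant. Since condition (b) appears verbatim on both sides of the corollary, my task reduces to showing that (a) is equivalent to the skill curve being Sonas-like. Lemma \ref{sonas-like} is already tailor-made for this equivalence, so the work is mostly bookkeeping to match up the parameters.

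For the forward direction, I will start from a skill curve that is $P$ separable with a linear bisector. By Lemma \ref{lem:bisectorshift}, bisectors are unique up to a vertical shift, so I may translate to obtain a bisector of the form $\beta(x) = ax$ for some $a \ge 0$, with $a > 0$ guaranteed by nontriviality. I then set $s := P/a$; the linear form of $\beta$ gives $\sigma(s,0) = as + 0.5 = P + 0.5$, so $P = \sigma(s,0) - 0.5$. Plugging this identification into Lemma \ref{sonas-like} yields that $\sigma$ is Sonas-like with parameters $a$ and $s$, as desired.

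For the reverse direction, suppose $\sigma$ is Sonas-like with parameters $a, s$ such that $\sigma(s,0) - 0.5 = P$ (matching up the Sonas-like threshold with the strong $P$ opponent indifference parameter). Lemma \ref{sonas-like} then gives that $\sigma$ is $P$ separable with the linear bisector $\beta(x) = ax$. Combined with the hypothesized $P$ constant $K$-function, Theorem \ref{spop-char} delivers strong $P$ opponent indifference.

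The only subtlety I expect is purely notational: Lemma \ref{sonas-like} states Sonas-likeness in terms of a pair $(a,s)$, while the corollary states it in terms of a parameter $P$, and the two are linked by $P = as = \sigma(s,0) - 0.5$. Once this identification is fixed using the slope $a$ of the linear bisector, the corollary is essentially a repackaging of Theorem \ref{spop-char} in the vocabulary of Sonas-like skill curves, so I anticipate no substantive technical obstacle beyond this parameter translation.
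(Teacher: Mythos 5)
Your proposal is correct and follows essentially the same route as the paper, whose entire proof is to cite Theorem \ref{spop-char} together with Lemma \ref{sonas-like}. The extra bookkeeping you do — identifying the Sonas parameters via $P = as = \sigma(s,0) - 0.5$ and normalizing the bisector to $\beta(x) = ax$ — simply makes explicit a parameter correspondence that the paper leaves implicit in its one-line argument.
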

\begin{proof}
This follows directly from Theorem $\ref{spop-char}$ and Lemma $\ref{sonas-like}$.
\end{proof}
\section*{Acknowledgments}

We are grateful to Jeff Sonas for a helpful discussion on the history and relationship between rating systems, and we are grateful to an anonymous reviewer for thorough and helpful feedback.

\bibliographystyle{plainurl}
\bibliography{references}

\appendix

\section{Extra Proofs} \label{A}

\begin{lemma} \label{app:linearize}
 If a function $f:(-A,A) \to \rr$ is continuous and for $x,y \in (-A,A)$ we have $f(x+y) = f(x) +f(y)$, then $f$ is linear over $(-A,A)$.
\end{lemma}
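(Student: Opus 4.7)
The plan is to establish the classical Cauchy functional equation result, adapted to the restricted domain $(-A,A)$. Interpreting the hypothesis as $f(x+y)=f(x)+f(y)$ whenever $x,y,x+y \in (-A,A)$ (which is how it was derived in Theorem~\ref{sop-linear}), the strategy is to show $f$ is $\mathbb{Q}$-linear on inputs where the arithmetic remains inside the domain, and then promote $\mathbb{Q}$-linearity to $\mathbb{R}$-linearity via continuity.

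First I would collect the basic identities. Taking $x=y=0$ gives $f(0)=0$; taking $y=-x$ for any $x \in (-A,A)$ (so that $-x$ and $x+(-x)=0$ are both in the domain) gives $f(-x)=-f(x)$. Next, I would prove by induction that $f(nx)=nf(x)$ for every positive integer $n$ and every $x$ with $nx \in (-A,A)$: the inductive step $f(nx)=f((n-1)x+x)=f((n-1)x)+f(x)$ is legal because $nx \in (-A,A)$ forces $|x|<A/n$, hence $(n-1)x \in (-A,A)$ as well. Combining with $f(-x)=-f(x)$ extends this to all nonzero integers. Dividing, $f(x)=f(n\cdot (x/n))=nf(x/n)$, so $f(x/n)=f(x)/n$, and together these give $f(qx)=qf(x)$ for every rational $q$ such that $qx \in (-A,A)$.

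To finish, fix any nonzero $c \in (-A,A)$ and set $m:=f(c)/c$. For an arbitrary $x \in (-A,A)$, choose a sequence of rationals $q_n \to x/c$; for $n$ large enough we have $q_n c \in (-A,A)$ and also $|q_n c|$ is bounded, so by the previous step $f(q_n c)=q_n f(c)=m \cdot (q_n c) \to mx$. Continuity of $f$ gives $f(q_n c)\to f(x)$, hence $f(x)=mx$, establishing linearity on $(-A,A)$.

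The main obstacle is simply the bookkeeping imposed by the restricted domain: additivity is only usable when all three quantities $x$, $y$, $x+y$ lie in $(-A,A)$, so the induction and the rational scaling arguments both require explicit checks that intermediate terms stay inside the interval. No new ideas beyond the classical Cauchy equation argument are needed; the work is in verifying that every step respects the domain.
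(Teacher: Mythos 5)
Your proposal is correct and follows essentially the same route as the paper's proof: integer homogeneity by induction, rational homogeneity by the division trick, and then real homogeneity via density of $\mathbb{Q}$ and continuity. The only differences are cosmetic --- you are more explicit about the domain bookkeeping (e.g.\ checking that $(n-1)x$ stays in $(-A,A)$) and you finish by fixing a reference point $c$ and writing $f(x) = (f(c)/c)\,x$, rather than proving $f(rx) = r f(x)$ directly, but the substance is identical.
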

\begin{proof}
 To prove linearity, we need only prove $f(rx) = r \cdot f(x)$ for all $x, rx \in (-A, A)$. To do this we will prove the following propositions in order:
\begin{enumerate}
    \item $f(nx) = n \cdot f(x)$ for $n,x$ such that $n \in \mathbb{Z}$ and $x,nx \in (-A,A)$
    \item $f(qx) = q \cdot f(x)$ for $q,x$ such that $q \in \mathbb{Q}$ and $x,qx \in (-A,A)$
    \item $f(rx) = r \cdot f(x)$ for $r,x$ such that $n \in \rr$ and $x,rx \in (-A,A)$.
\end{enumerate}

The first proposition follows directly from the fact that $f(x+y)= f(x)+f(y)$, and it can be used to prove the second proposition. Let $q= \frac{a}{b}$ such that $a,b \in \mathbb{Z}$ .
\begin{align*}
    b \cdot f(qx) &= b \cdot f(\frac{a}{b} \cdot x)\\
    &= ab \cdot f(\frac{1}{b} \cdot x) \tag*{$f(\frac{1}{b} \cdot x)$ is defined}\\
    &= a\cdot f(b \cdot \frac{1}{b} \cdot x)\\
    &= a \cdot f(x).
\end{align*}
Dividing both sides of the equation by b gives
\begin{align*}
    f(qx) &= \frac{a}{b} \cdot f(x)\\
    &= q \cdot f(x). \\
\end{align*}

To show proposition 3, we can use proposition 2 and continuity. Since $\mathbb{Q}$ is dense in $\rr$, we know there exists a sequence of rationals $q_n$ that converges to any $r \in \rr$. Thus we can say
\begin{align*}
    f(rx) &= f\left( \lim_{n\to\infty} q_n x\right)\\
    &= \lim_{n\to\infty} f \left(q_n x\right) \tag*{continuity of $f$}\\
    &= \lim_{n\to\infty} q_n \cdot f\left(x\right) \tag*{proposition 2}\\
    &= r \cdot f \left(x\right).
\end{align*}

So $f$ must be linear over $(-A,A)$.
\end{proof}
\begin{lemma} \label{app:poi}
Definition \ref{pop_def} and Definition \ref{def:poi} are equivalent.
\end{lemma}
\begin{proof}
 ($\longrightarrow$) Definition \ref{pop_def} is immediately stronger than Definition \ref{def:poi} since the former requires $P$-closeness among fewer numbers.
 
 ($\longleftarrow$) If a rating system satisfies Definition \ref{def:poi}, then we may apply Theorem \ref{theorem:characterize_pop}.
 Thus $\sigma(x,y) = \beta(x)-\beta(y) + 0.5$ and $K(x,y) = C$ for some constant $C$ for $x,y$ $P$-close. Looking at $x,y$ and $x^*,y$ both $P$-close for $x,x^*,y$ as defined in Definition \ref{pop_def}, we evaluate
 \begin{align*}
 \gamma(x,x^* \mid y, y) &= K(x,y)(\sigma(x^*,y)-\sigma(x,y)) \tag*{Lemma \ref{exp-gain}}\\
 &= C \cdot (\beta(x^*) - \beta(y) + 0.5 -(\beta(x)-\beta(y) +0.5)) \tag*{Theorem \ref{theorem:characterize_pop}}\\
  &= C \cdot (\beta(x^*) -\beta(x))
 \end{align*}
 which only depends on the first two parameters, therefore satisfying Definition \ref{pop_def}.
\end{proof}
\begin{lemma}\label{app:spoi}
Definition \ref{intro_spoi} and Definition \ref{def:spoi} are equivalent.
\end{lemma}
\begin{proof}
 ($\longrightarrow$) Definition \ref{intro_spoi} is immediately stronger than Definition \ref{def:spoi}, since the former requires $P$-closeness among fewer numbers.
 
 ($\longleftarrow$) If a rating system satisfies \ref{def:spoi}, then we may apply Theorem \ref{spop-char}.
 Thus $\sigma(x,y) = \beta(x)-\beta(y) + 0.5$ where $\beta$ is linear and $K(x,y) = C$ for some constant $C$ for $x,y$ $P$-close. Looking at $x,y$ and $x^*,y+\delta$ both $P$-close for $x,x^*,y,\delta$ as defined in Definition \ref{pop_def}, we evaluate
 \begin{align*}
 \gamma(x,x^* \mid y, y+\delta) &= K(x,y)(\sigma(x^*,y+\delta)-\sigma(x,y)) \tag*{Lemma \ref{exp-gain}}\\
 &= C \cdot (\beta(x^*) - \beta(y+\delta) + 0.5 -(\beta(x)-\beta(y) +0.5)) \tag*{Theorem \ref{spop-char}}\\
  &= C \cdot (\beta(x^*) - (\beta(y) + \beta(\delta)) + 0.5 -(\beta(x)-\beta(y) +0.5)) \tag*{$\beta$ linear}\\
  &= C \cdot (\beta(x^*) -\beta(x) - \beta(\delta))
 \end{align*}
 which only depends on the first two parameters and the difference of the latter two parameters, therefore satisfying Definition \ref{pop_def}.
\end{proof}
\end{document}